\newtheorem{definition}{Definition}
\newtheorem{theorem}{Theorem}
\newtheorem{lemma}{Lemma}
\newtheorem{assumption}{Assumption}
\newtheorem{remark}{Remark}
\newtheorem{claim}{Claim}
\newcommand{\prob}[1]{\mathsf{Pr}\!\left(#1\right)}
\newcommand{\EXP}[1]{\mathsf{E}\!\left[#1\right]}
\newcommand{\remove}[1]{}
\newcommand{\Ind}[1]{\mathds{1}_{\left\{ {#1} \right\} }}
\newcommand{\ignore}[1]{}
\newcommand{\R}{\mathbb{R}}
\newcommand{\nn}{\nonumber}
\newcommand{\norm}[1]{\| #1 \|}
\newcommand{\st}{$P_1$}
\newcommand{\ins}{$P_A$}
\newcommand{\calcZ}{$Calc_Z$}
\newcommand{\jk}[1]{ \ifthenelse{\boolean{showcomments}}
	{\textcolor{red}{(JK says: #1)}} {} }
\newcommand{\hs}[1]{\ifthenelse{\boolean{showcomments}} 
	{\textcolor{magenta}{(HS says: #1)}} {} }
\newcommand{\dm}[1]{\ifthenelse{\boolean{showcomments}} 
	{\textcolor{blue}{(DM says: #1)}} {} }
\title{Blotto on the Ballot: A Ballot Stuffing Blotto Game}
\author {
	Harsh Shah\textsuperscript{\rm 1},
	Jayakrishnan Nair\textsuperscript{\rm 1},
	D Manjunath\textsuperscript{\rm 1},
	Narayan Mandayam\textsuperscript{\rm 2}
}
\begin{document}
	
	\maketitle
	
	\begin{abstract}
		We consider the following Colonel Blotto game between parties $P_1$ and $P_A.$ $P_1$ deploys a non negative number of troops across $J$ battlefields, while $P_A$ chooses $K,$ $K < J,$ battlefields to remove all of $P_1$'s troops from the chosen battlefields. $P_1$ has the objective of maximizing the number of surviving troops while $P_A$ wants to minimize it. Drawing an analogy with ballot stuffing by a party contesting an election and the countermeasures by the Election Commission  to negate that, we call this the \emph{Ballot Stuffing Game.} For this zero-sum resource allocation game, we obtain the set of Nash equilibria as a solution to a convex combinatorial optimization problem. We analyze this optimization problem and obtain insights into the several non trivial features of the equilibrium behavior. These features in turn allows to describe the structure of the solutions and efficient algorithms to obtain then. The model is described as ballot stuffing game in a plebiscite but has applications in security and auditing games. The results are extended to a parliamentary election model. Numerical examples illustrate applications of the game. 
	\end{abstract}
	
	%
	\textbf{Keywords---Colonel Blotto game, Nash equilibrium computation} 
	\section{Introduction}
	\label{sec:intro}
	Resource allocation games over multiple asymmetric battlefields between  asymmetric competing players has wide applications in electoral politics, e.g., \cite{Washburn13}, security deployments, e.g., \cite{Boehmer2024}, market strategies, e.g., \cite{Maljkovic2024}), etc. The asymmetry could be in the deployment cost in and/or the payoffs from the different battlefields and also in the strategy space for the two players. A popular model is the Colonel Blotto Game. The classical version is a two-player, zero-sum, multi-dimensional resource allocation game over $J$ battlefields. The objective of the players is to maximize the number of battlefields that they win.  Players have finite resources that are distributed among the $J$ battlefields without knowing the opponents deployment. Each battlefield is won by the player that deploys the higher number of resources for that location.  
	We describe and analyze a new variation that we call the \textit{Ballot Stuffing Game.}
	
	Our illustrative scenario is a game between an election contestant (attacker) and the Election Commission (the defender) that is responsible for a fair election. Party $P_1$ is contesting the election and attempts to gain an unfair advantage over its rivals by indulging in `ballot stuffing,' i.e., by having votes cast in its favour by non eligible voters, voters voting multiple times, or through other means. The Election Commission, denoted by $P_A,$ is aware of this intention and aims to mitigate the negative effect by placing observers in a subset of the booths to prevent stuffing. 
	The resources of the attacker $P_1$ above is a positive real number and it incurs a deployment dependent cost. The resources of the defender called $P_A$ above, is an integer and the possible deployment in each battlefield is a binary integer. The effect of $P_A$ deploying a resource in a battlefield is to nullify all the resources of $P_1$ in that battlefield. In the next section, we formally describe the model and develop the notation. We will also introduce two variations of the \emph{Ballot Stuffing Game.}
	
	Ballot stuffing, while not widespread, is a known threat in all forms of elections in all countries. There have been accusations of election fraud in mayoral elections in the US\cite{Bridgeport}, the Russian elections \cite{Harvey2016,Russia-Video}, and electronic voting over the Internet in organisational elections \cite{Cortier2011}. It is also not uncommon in reputation systems \cite{Bhattacharjee05}. However, it has been observed that the presence of observers or inspectors reduces the levels of fraud \cite{Asunka19}.  
	
	Colonel Blotto games have been studied since, at least 1912 when it was first posed by Emile Borel. A key insight was developed in \cite{Roberson06} where the unique equilibrium payoffs and the distributions of the deployments were characterized in some detail. Since then several contributions to the algorithms to compute the equilibrium have been developed in \cite{Ahemdinejad16}, \cite{Behnezhad17}, \cite{Behnezhad2018}, \cite{Behnezhad18b}, \cite{Behnezhad19}, and \cite{Behnezhad23} among others. There are also some similarities of the Ballot Stuffing Game to the Auditing Games games \cite{Behnezhad19} and Blotto Game with Testing \cite{Sonin2024}. However, the resources available to the attacker and the defender and also the rules of engagement in the battlefields in these games are significantly different. 
	
	\section{Model and Notation}
	\label{sec:Model}
	There are $J$ polling stations and two players ($P_1,P_2$) contesting for the votes that will be cast at these stations. The actual number of votes that will be cast for Player $P_i$ in station $j$ is a random variable $X_{i,j},$ with $i \in \{1,2\}$ and $1 \leq j \leq J.$ Define $X_j := X_{1,j}-X_{2,j}$ as the excess votes for $P_1$ that will be cast in station $j.$ 
	
	Let $V$ be the event that $P_1$ is the victor in the election. There are two election models that we will consider. In the Plebiscite model
	\begin{equation}
		V = \Ind{\sum_{j=1}^J X_j > 0}
		\label{eq:pleb-V}
	\end{equation}
	i.e., $P_1$ wins the election if the total of the votes cast for it from all the stations exceeds that of $P_2.$ We could also define a generalized Parliamentary model with $w_j$ being the weight associated with polling station $j$ and 
	\begin{equation}
		V = \Ind {\left( \sum_{j=1}^J w_j \Ind { X_j > 0 }\right) > W/2 }.
		\label{eq:parl-V2}
	\end{equation}
	Here, $P_1$ is the winner if the weighted sum of the polling stations won by $P_1$ is more than half of the sum of the weights. In the UK and many countries $w_j=1$ while the US has different weights for the different stations. 
	
	
	$P_1$ can try to increase its vote count in station $j$ (without any corresponding response from $P_2$) by, say, $z_j$ at a cost of $g_j(z_j).$ We assume the cost function $g_j(z_j)$ is strictly convex, differentiable, and increasing in its argument, beginning from zero when its argument is zero, for $1 \leq j \leq J.$ Define $z:=[z_1, \ldots, z_J]$ to be the vector of votes stuffed by $P_1$ in each of the stations, $g(z):=[g_1(z_1), \ldots, g_J(z_J)]$ to be the vector of cost functions and $V(z):= \Ind{\sum_{j=1}^J (X_j+z_j) > 0}.$ Let $\phi(z)$ be the probability that $P_1$ wins the election when it stuffs according to $z.$ For the plebiscite model $\phi(z)$ will be given by
	\begin{align}
		\phi(z) := \EXP{V(z)} & = \prob{ \sum_{j=1}^J (X_j + z_j) > 0 },
		\label{eq:pleb-phi}
	\end{align}
	and for the parliamentary model, we will have
	\begin{align}
		\phi(z) := \EXP{V(z)} & = \prob{ \ \left( \sum_{j=1}^J
			\Ind{ (X_j+z_j) > 0 }\right) > J/2 }.
		\label{eq:parl-phi}
	\end{align}
	$\phi(0)$ is the prior probability of $P_1$ winning.
	
	$P_1$ has a total budget of $G$ and, assuming the costs are additive, has the following objective.	%
	\begin{align}
		& \max_{z} \ &\quad & \ \phi(z) \nonumber \\
		& \mbox{subject to } && \sum_{j=1}^J g_j(z_j) \ \leq \ G, \nonumber \\    
		&  && 0 \leq z_j \hspace{0.3in} \mbox{for $1 \leq j \leq J$},
		\label{eq:P1-Obj}
	\end{align}
	
	\subsection{The Ballot Stuffing Game}
	Recall that we had assumed that $P_2$ does not respond to the stuffing but the Election Commission, the polling oversight authority, denoted by $P_A,$ is aware of the stuffing intentions of $P_1$ and has $K$, $K <
	J$ inspectors that can be deployed at the polling stations. If an
	inspector is deployed at polling station $j,$ then $P_1$ does not
	stuff at that station and the $z_j$ that was planned for station $j$
	is wasted. Let $y_j$ be the indicator variable that an inspector gets
	deployed at station $j.$ Thus the vote differential at station $j$ is
	$X_j + z_j (1-y_j).$
	
	Let $y:=[y_1, \ldots, y_J]$ be the deployment of inspectors by $P_A.$
	Let $\mathcal{Y}$ be the set of all feasible deployments of the inspectors, i.e.,
	\begin{displaymath}
		\mathcal{Y} = \left\{y: \sum_j y_j = K,\ y_j \in \{0,1\}\right\}. 
	\end{displaymath}
	With the deployment of inspectors according to $y,$ Define $V(z,y):=\Ind{\sum_{j=1}^J (X_j +z_j(1-y_j))> 0}.$ For
	the plebiscite model $\phi(z,y)$ will given by
	\begin{align}
		\phi(z,y) :&= \EXP{ V(z,y)} \nonumber \\ 
		& = \prob{ \sum_{j=1}^J (X_j + (1-y_j)z_j) > 0 },
		\label{eq:phi-zy-pleb}
	\end{align}
	and for the parliamentary model, it will be 
	\begin{align}
		\phi(z,y) :&= \EXP{V(z,y)} \nonumber \\
		& = \prob{ \sum_{j=1}^J \Ind { (X_j+(1-y_j)z_j) > 0 }  > J/2 }.
		\label{eq:phi-zy-parl}  
	\end{align}
	
	Hence this develops a dynamic game between $P_1$ and $P_A$ which is defined as
	\begin{definition} \label{def:ballot-game} 
		The Ballot Stuffing Game $\mathcal{B}(g,G,K)$ is the full information game between $P_1$ and $P_A$ where $\{(\mu_j,\sigma_j\}_{j=1,\ldots J},$ $g,$ $G,$ and $K$ is known to both $P_1$ and $P_A.$ $P_1$ chooses $z$ according to 
		\begin{align}
			& \max_{z} \ & \quad & \ \phi(z,y) \nonumber \\
			& \mbox{subject to}  && \sum_{j=1}^J g_i(z_i) \ \leq \ G, \nonumber \\
			& && 0 \leq z_j \ \mbox{for $1 \leq j \leq J$},
			\label{eq:Game-P1-Obj}
		\end{align}
		and $P_A$ chooses $y$ according to
		\begin{align}
			& \min_{y} \ & \quad &  \phi(z,y) \nonumber\\   
			& \mbox{subject to:} && \sum_{j=1}^J y_i = K, \ \ \ \ y_j \in \{0,1\}.
			\label{eq:Game-PA-Obj}
		\end{align}
		
	\end{definition}
	
	\section{Plebiscite Model}
	\label{sec:plebiscite}
	
	In this section, we characterize the Nash equilibrium of the Ballot Stuffing Game between \st\ and \ins\ in the Plebiscite model.
	
	In the Plebiscite model, since $\phi(z,y)$ is  monotonically increasing in $\left( \sum_{i=1}^Jz_i \right)$, votes at all stations contribute equally to the objective function in \eqref{eq:Game-P1-Obj} and \eqref{eq:Game-PA-Obj} which then reduces to $\sum_{j=1}^J z_j(1-y_j).$ We now analyse equilibrium strategies for $P_1$ and $P_A$ in this ballot stuffing game.
	
	
	\subsection{A Stackelberg variation}
	\label{sec:Stackelberg}
	As a first step towards characterizing the Nash equilibrium of the Ballot Stuffing Game, we first analyze a Stackelberg variation of this game, with \st\ being the leader, and \ins\ the follower. We subsequently relate the optimization that the leader must perform in this Stackelberg game to the Nash equilibrium of the (simultaneous play) Ballot Stuffing Game.
	
	For any given action $z$ of \st\ (leader), the response by \ins\ (the follower) is trivial---place inspectors at those~$K$ booths where the most vote stuffing has occurred. Defining~$z_{(i)}$ as the $i$-th largest value of the stuffing vector~$z$ (i.e., $z_{(1)} \geq z_{(2)} \geq \cdots \geq z_{(J)},$ the equilibrium strategy on part of \st\ therefore solves the following optimization.
	\begin{align}
		\max_{z} \quad & \sum_{j= K+1}^J z_{(j)}
		\nonumber\\  
		\label{eq:Stackelberg1}
		\text{subject to} \quad & \sum_{j=1}^J g_j(z_j) \leq G \tag*{(Leader-Opt)} \\
		& 0 \leq z_j \ \mbox{ for } 1 \leq j \leq J \nn 
	\end{align}
	The following observations are now immediate.
	\begin{lemma}
		\label{lemma:Stackelber1}
		The optimization~\emph{\ref{eq:Stackelberg1}} admits a unique optimal solution~$z^*.$ 
		Moreover, 
		$$|\{i \in [J] \ \colon\ z^*_i = \norm{z^*}_{\infty}\} | \geq K+1,$$
		$$\sum_{j=1}^{J} g_j(z^*_j) = G,$$ i.e., the set of indices of~$z^*$ that have the maximum value has cardinality at least~$K+1$ and cost of constraints hold with equality.
	\end{lemma}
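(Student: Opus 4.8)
The plan is to prove the three assertions in sequence, after first rewriting the objective in a form that makes its concavity transparent. Write $f(z):=\sum_{j=K+1}^{J}z_{(j)}$; since this is the sum of the $J-K$ smallest entries of $z$, we have $f(z)=\min_{S\subseteq[J],\,|S|=K}\sum_{j\notin S}z_j$, a pointwise minimum of finitely many linear functionals. Consequently $f$ is concave, continuous, and nondecreasing in every coordinate, and it obeys the translation identity $f(z+\eta\mathbf{1})=f(z)+(J-K)\eta$ for all $\eta\ge 0$. Existence of an optimum is then immediate: each $g_j$ is coercive (being increasing and strictly convex), so the feasible set $\{z\ge 0:\sum_j g_j(z_j)\le G\}$ is closed and bounded, hence compact, and $f$ is continuous, so Weierstrass applies. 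Tightness of the budget also follows quickly: if $\sum_j g_j(z_j^*)<G$ at an optimum $z^*$, continuity of the $g_j$ gives $\eta>0$ with $z^*+\eta\mathbf{1}$ feasible, and the translation identity yields $f(z^*+\eta\mathbf{1})=f(z^*)+(J-K)\eta>f(z^*)$ since $K<J$ --- a contradiction.

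Uniqueness I would then obtain from strict convexity of the costs. Suppose $z^{(1)}\neq z^{(2)}$ are both optimal; their midpoint $\bar z$ is feasible because the feasible set is convex, and concavity of $f$ gives $f(\bar z)\ge\tfrac12\big(f(z^{(1)})+f(z^{(2)})\big)$, which equals the optimal value, so $\bar z$ is optimal as well. But the two points differ in at least one coordinate, so strict convexity of the $g_j$ forces $\sum_j g_j(\bar z_j)<\tfrac12\big(\sum_j g_j(z^{(1)}_j)+\sum_j g_j(z^{(2)}_j)\big)\le G$; thus $\bar z$ is an optimum with slack in the budget, contradicting the tightness just established. Hence the optimum is unique.

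For the cardinality bound I would fix the unique optimum $z^*$, a permutation $\pi$ with $z^*_{\pi(1)}\ge\cdots\ge z^*_{\pi(J)}$, and set $v:=z^*_{\pi(K+1)}$; then I would ``flatten the top block,'' defining $\tilde z_{\pi(i)}=v$ for $i\le K$ and $\tilde z_{\pi(i)}=z^*_{\pi(i)}$ for $i>K$ (which keeps $\tilde z\ge 0$ since $v\ge 0$, and only lowers costs). The capped multiset consists of $K+1$ copies of $v$ together with $z^*_{\pi(K+2)},\dots,z^*_{\pi(J)}$, so deleting its $K$ largest entries (all equal to $v$) leaves the values $v,z^*_{\pi(K+2)},\dots,z^*_{\pi(J)}$, whose sum is exactly $\sum_{j=K+1}^{J}z^*_{\pi(j)}=f(z^*)$; hence $f(\tilde z)=f(z^*)$. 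If some top-$K$ coordinate strictly exceeded $v$, then $\sum_j g_j(\tilde z_j)<\sum_j g_j(z^*_j)\le G$, making $\tilde z$ an optimum with budget slack --- impossible by the tightness step. Therefore $z^*_{\pi(1)}=\cdots=z^*_{\pi(K)}=z^*_{\pi(K+1)}=v$, and since $z^*_{\pi(1)}=\norm{z^*}_\infty$, at least $K+1$ coordinates attain $\norm{z^*}_\infty$.

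The step I expect to be the main obstacle is this last one. A naive local perturbation is not enough: the example $z^*=c\mathbf{1}$ shows that bumping a single smallest coordinate leaves $f$ unchanged, and most small perturbations either violate feasibility or fail to move $f$. The correct device is the global flattening operation above, after which one must carefully run the order-statistic bookkeeping to confirm that it preserves $f$ exactly while strictly decreasing the cost whenever the top block is not already flat. The remaining pieces --- existence, budget tightness, and uniqueness --- are routine once the reformulation $f(z)=\min_{|S|=K}\sum_{j\notin S}z_j$, its translation identity, the concavity of $f$, and the strict convexity of the $g_j$ are in hand.
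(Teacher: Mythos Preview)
Your proof is correct and follows essentially the same route as the paper: budget tightness via a strict improvement when there is slack, uniqueness via the midpoint argument combined with strict convexity of the $g_j$, and the cardinality bound by observing that the top $K$ coordinates can be lowered to $z^*_{(K+1)}$ without affecting the objective. Your version is more carefully written---in particular, your explicit ``flattening'' construction makes rigorous the one-sentence intuition the paper offers for the cardinality step---but the underlying ideas coincide.
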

	The proof of the lemma is provided in the Appendix. To characterize other structural properties of the solution of~\ref{eq:Stackelberg1} using tools from convex optimization, it is convenient to express it in a standard form as follows.
	\begin{align}
		\min \quad & -U \nn \\
		\text{subject to} \quad & 0 \leq z_j, \quad
		j=1,\ldots,J \nn \\ 
		& U \leq z \ \cdot \ (\boldsymbol{1}-y) \ \ \ \forall\ y \in \mathcal{Y} \label{eq:Stackelberg2}  \tag*{(Leader-Opt-Std)} \\
		& \sum_{j=1}^{J} g_j(z_j) \leq G \nn
	\end{align}
	Here, $\boldsymbol{1}$ denotes a vector of ones, and the auxiliary variable~$U$ captures the number of votes that are `successfully stuffed.' That~\ref{eq:Stackelberg2} is a convex optimization is immediate. In the special case~$K = 0$ (i.e., there is no~\ins\ at all),~\ref{eq:Stackelberg2} can be solved via a standard water-filling approach~\cite{Boyd}. However, the case $K>0$ is considerably more involved,\footnote{It is also worth noting that since $|\mathcal{Y}| = {J \choose K},$ the optimization~\ref{eq:Stackelberg2} has combinatorially many constraints; we describe a novel and efficient waterfilling-inspired algorithm to solve this optimization in Section~\ref{sec:algo}.} as we now demonstrate below by identifying the structural properties that uniquely characterize the solution of~\ref{eq:Stackelberg2}.
	
	For a stuffing vector~$z \in \R_+^{J},$ define
	\begin{itemize} 
		\item $A_z = \{i \in [J] \colon z_i = \norm{z}_{\infty}\}$ as the set of indices that take the maximum value,
		\item $B_z = \{i \in [J] \colon 0 < z_i < \norm{z}_{\infty}\}$ as the set of indices that take a positive value strictly less than the maximum value,
		\item $C_z = \{i \in [J] \colon z_i = 0$ as the set of indices of $z$ that take the value zero.
	\end{itemize}
	Note that for a \emph{non-zero} vector~$z,$ the sets $A_z,$ $B_z$ and $C_z$ define a partition of~$[J].$
	
	\begin{theorem}[\textbf{Structure of $z^*$}]
		\label{thm:Stackelberg-structure}
		For non-zero vector $z \in \R_+^{J}$ satisfying~$|A_z| > K,$ define $$\theta_z := \frac{\sum_{j \in A_z}g_j'(z_j)}{|A_z|-K}.$$ Such a $z$ is the unique optimal solution of~\emph{\ref{eq:Stackelberg2}} if and only if the following conditions are satisfied.
		\begin{enumerate}
			
			\item \textbf{Slope Order Condition}: The slopes \{\( g'_i(z_i) \)\} follow the order:
			\[
			g'_j(z_j) \leq \theta_z \leq g'_{\ell}(z_{\ell}) \quad \forall j \in A_z, \ell \in C_z
			\]
			
			\item \textbf{Structure Slope Condition}: For every \( k \in B_z \):
			\[
			g'_k(z_k) = \frac{\sum_{j \in A_z} g'_j(z_j)}{|A_z| - K}
			\]
			
			\item \textbf{Sum Condition}: The cost constraint holds with equality:
			\[
			\sum_{i=1}^{J} g_i(z_i) = G
			\]
		\end{enumerate}
	\end{theorem}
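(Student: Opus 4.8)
The plan is to apply the Karush-Kuhn-Tucker (KKT) conditions to the convex program \ref{eq:Stackelberg2}. Since the problem is convex with affine inequality constraints (the constraints $U \le z\cdot(\boldsymbol 1 - y)$ are linear in $(U,z)$, and $\sum_j g_j(z_j) \le G$ is convex), and Slater's condition holds (take any strictly feasible $z$ with $0 < \sum g_j(z_j) < G$ and $U$ small), the KKT conditions are both necessary and sufficient for optimality. So the whole theorem reduces to showing that the three listed conditions are equivalent to the existence of valid KKT multipliers, given that we already know from Lemma~\ref{lemma:Stackelber1} that the optimal $z^*$ is non-zero, satisfies $|A_{z^*}| \ge K+1$, and makes the budget constraint tight.

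First I would write the Lagrangian. Let $\lambda \ge 0$ be the multiplier for the budget constraint, $\{\nu_y \ge 0\}_{y\in\mathcal Y}$ the multipliers for the $\binom{J}{K}$ constraints $U \le z\cdot(\boldsymbol1 - y)$, and $\{\eta_j \ge 0\}$ the multipliers for $z_j \ge 0$. Stationarity in $U$ gives $\sum_{y} \nu_y = 1$, so the $\nu_y$ form a probability distribution over $\mathcal Y$; write $p_j := \sum_{y : y_j = 1} \nu_y \in [0,1]$ for the induced marginal probability that booth $j$ is inspected, with $\sum_j p_j = K$. Stationarity in $z_j$ then reads $-(1-p_j) + \lambda g_j'(z_j) - \eta_j = 0$, i.e. $\lambda g_j'(z_j) = (1-p_j) + \eta_j$. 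Complementary slackness forces $\eta_j = 0$ whenever $z_j > 0$, and the active-constraint condition forces $\nu_y > 0$ only for those $y$ that place all $K$ inspectors inside the top set (indices achieving $z\cdot(\boldsymbol1-y) = U$, i.e. the smallest coordinates of $z$); concretely, inspectors are only ever placed on coordinates in $A_z$ when $|A_z| > K$, so $p_j = 0$ for $j \in B_z \cup C_z$ and $\sum_{j\in A_z} p_j = K$.

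Putting these together: for $j \in B_z$ we get $\lambda g_j'(z_j) = 1$; for $j \in A_z$ we get $\lambda g_j'(z_j) = 1 - p_j$ with $p_j \in [0,1]$ and $\sum_{j\in A_z} p_j = K$, hence averaging over $A_z$ gives $\lambda \sum_{j\in A_z} g_j'(z_j) = |A_z| - K$, i.e. $\lambda^{-1} = \theta_z$; substituting back, $g_j'(z_j) = (1-p_j)\theta_z \le \theta_z$ for $j\in A_z$ and $g_k'(z_k) = \lambda^{-1} = \theta_z$ for $k \in B_z$, which is exactly the Structure Slope Condition and the upper half of the Slope Order Condition. For $j \in C_z$, complementary slackness allows $\eta_j \ge 0$, giving $\lambda g_j'(z_j) = 1 + \eta_j \ge 1 = \lambda\theta_z$, i.e. $g_j'(z_j) \ge \theta_z$, the lower half of the Slope Order Condition; and the Sum Condition is just $\lambda > 0$ (which holds because the objective strictly increases in the budget) together with complementary slackness on the budget constraint. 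Conversely, given a $z$ satisfying the three conditions, I would reverse-engineer the multipliers: set $\lambda = 1/\theta_z$, choose $p_j = 1 - \lambda g_j'(z_j) \in [0,1]$ for $j \in A_z$ (nonnegative by the Slope Order Condition, and summing to $K$ by the definition of $\theta_z$), realize these marginals as a convex combination $\{\nu_y\}$ supported on $y$'s with $\{j : y_j=1\}\subseteq A_z$ (possible since $|A_z| > K$ — e.g. by a Birkhoff-type / greedy rounding argument), set $p_j = 0$ and $\eta_j = \lambda g_j'(z_j) - 1 \ge 0$ off $A_z$ (nonnegative on $C_z$ by the Slope Order Condition, and $=0$ on $B_z$ by the Structure Slope Condition), and check all KKT equations hold. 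Uniqueness then follows from strict convexity of each $g_j$: two distinct optimal solutions would, by convexity, force a midpoint that violates strict convexity in the budget constraint — or one can cite Lemma~\ref{lemma:Stackelber1}, which already asserts uniqueness.

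The main obstacle I anticipate is the bookkeeping around the inspection marginals $p_j$: first, arguing cleanly that in any optimal dual solution the active inspection constraints only charge coordinates in $A_z$ (one must rule out $\nu_y > 0$ for a $y$ touching $B_z$, using that such a $y$ would not attain the minimum of $z\cdot(\boldsymbol1-y)$), and second, the converse direction's realizability step — showing any marginal vector $(p_j)_{j\in A_z}$ with entries in $[0,1]$ summing to the integer $K$ is the marginal of some distribution over $K$-subsets of $A_z$. The latter is a standard fact (it is exactly the integrality/Birkhoff property of the base polytope of a uniform matroid, or can be shown directly by a greedy "fill to one" construction), but it needs to be stated carefully. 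Everything else is routine convex-analysis manipulation.
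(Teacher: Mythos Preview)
Your proposal is correct and follows essentially the same route as the paper: both arguments write the KKT conditions for \ref{eq:Stackelberg2}, observe that stationarity in $U$ forces the inspection multipliers to form a probability distribution, read off the marginals $p_j$ (the paper works with the complementary quantities $\xi_i = 1-p_i$), use complementary slackness to conclude $p_j = 0$ off $A_z$, and then translate the stationarity-in-$z_j$ equations into the three stated conditions. The one substantive point of divergence is the realizability step in the converse direction---showing that marginals $(p_j)_{j\in A_z}$ with entries in $[0,1]$ summing to $K$ arise from an actual distribution over $K$-subsets of $A_z$: the paper proves this via Farkas' lemma (their Lemma~\ref{lemma:exist-LM}), whereas you propose a Birkhoff/uniform-matroid-polytope argument; both are valid and of comparable length.
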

	
	Theorem~\ref{thm:Stackelberg-structure} provides necessary and sufficient conditions for a stuffing vector to optimize~\ref{eq:Stackelberg2}, or equivalently, for a stuffing vector to constitute the equilibrium action of the leader in the Stackelberg game under consideration. The quantity~$\theta_z$ plays a key role in these conditions; we refer to it as the \emph{structure slope.} It may be interpreted as the marginal cost associated with successful vote stuffing from the collection of booths in~$A_z.$ Indeed, since \ins\ (the follower) would wipe out the vote stuffing planned in $K$ of the booths in~$A_z,$ a small~$\epsilon$ increment in the planned stuffing at each booth in~$A_z,$ at an incremental cost of~$\epsilon \sum_{i\in A_z}{g_i'(z_i)},$ would result in a $(|A_z| - K)\epsilon$ increment in successful vote stuffing. The \emph{structure slope condition} in Theorem~\ref{thm:Stackelberg-structure} states that under the optimal stuffing vector, the marginal cost of stuffing at any booth in~$B_z$ must match the structure slope (if $B_z$ is non-empty). Additionally, the \emph{slope order condition} states that the marginal cost of planned stuffing of any booth in~$A_z$ must be less than the structure slope, whereas the marginal cost of stuffing at the `unstuffed' booths in~$C_z$ must be greater. The proof of Theorem~\ref{thm:Stackelberg-structure} is based on invoking the KKT conditions for the optimization~\ref{eq:Stackelberg2} and a novel application of Farkas' lemma \cite{Boyd}; the details are in the Appendix.
	
	\subsection{Nash equilibrium under the Plebiscite model}
	
	Having characterized the equilibrium of a leader-follower variant of the Ballot Stuffing Game, we now characterize a Nash equilibrium of the original (simultaneous move) game. Interestingly, we find that this Nash equilibrium is closely tied to the Stackelberg equilibrium analyzed above. Specifically, the strategy of \st\ remains the same. However, the Nash equilibrium involves a \emph{mixed} strategy on part of~\ins~(since any pure strategy enables a favourable unilateral deviation on part of \st). Moreover, the mixed strategy of~\ins\ involves a \emph{non-uniform} randomization over the set~$A_{z^*}.$

		\ignore{
			\begin{remark}
				The Lagrangian Multipliers $\xi_i,\delta_t$ have a probabilistic significance. We get from \textbf{Lemma ??} that $\xi_b=\xi_c=1 \ \forall \ b \in B, c \ \in C.$ $\xi_i$ can be seen as probability of votes being successfully stuffed in station $i$. As votes in set $B.C$ will always be stuffed successfully, we have $\xi$ term 1. For set $A$, $K$ of the stations won't be able to stuff the votes. These $K$ are chosen by the \ins\ and can be potentially random. 
			\end{remark}
			\begin{remark}
				Slater's condition is trivially satisfied 
			\end{remark}
		} 

		In what follows, we first define a certain randomized strategy~$q^*$ for \ins~(over $K$-sized subsets of~$A_{z^*}$), and then show that $(z^*, q^*)$ is a Nash equilibrium of the Ballot Stuffing Game.
		
		We begin by defining the probability~$p_a$ that booth~$a \in A_{z^*}$ is chosen for inspection by~\ins:
		$$p_a := 1- \frac{g'_a(z^*_a)}{\theta_{z^*}}.$$ Note that the probability of \emph{not} choosing booth~$a$ for inspection is proportional to~$g'_a(z^*_a),$ and the normalization by the structure slope $\theta_{z^*}$ ensures that (i) $p_a \leq 1$ (recall the \emph{slope order condition} in Theorem~\ref{thm:Stackelberg-structure}), and (ii) $\sum_{a \in A_{z^*}} p_a = K,$ which is consistent with a total of~$K$ inspectors being deployed. The intuition behind $1 - p_{a} \propto g'_{a}(z_a^*)$ is that \ins\ chooses to inspect booths that are `easier to stuff,' i.e., booths with a smaller marginal cost of planned stuffing, with higher probability.

		It remains to show that there exists a probability mass function~$q^*$ over the space $\mathcal{A}_{z^*}^K$ of $K$-sized subsets of~$A_{z^*}$ that is consistent with $(p_a, a \in A_{z^*}),$ i.e., 
		\begin{equation}
			\label{eq:p-q_consistency}
			\displaystyle
			p_a = \sum_{I \in \mathcal{A}_{z^*}^K \colon a \in I} q^*_I.
		\end{equation}
		Here, $q^*_I$ denotes the probability that~\ins\ chooses the set of booths~$I \subset A_{z^*}$ for inspection, where $|I| = K.$ That such a $q^*$ exists is shown below.
		\begin{lemma}
			\label{lemma:q_existence}
			There exists a probability mass function~$q^*$ over the set $\mathcal{A}_{z^*}^K$ satisfying~\eqref{eq:p-q_consistency}.
		\end{lemma}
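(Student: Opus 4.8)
The plan is to read the consistency requirement~\eqref{eq:p-q_consistency} as the geometric statement that the marginal vector $p = (p_a)_{a \in A_{z^*}}$ lies in the convex hull of the indicator vectors $\{\mathbf{1}_I : I \in \mathcal{A}_{z^*}^K\}$, and then to take $q^*$ to be the coefficients in such a convex representation. First I would record the two facts about $p$ that are already in hand: every $p_a$ satisfies $0 \le p_a \le 1$ --- the lower bound because $g'_a(z^*_a) \ge 0$ and $\theta_{z^*} > 0$ (the latter from strict monotonicity of the costs together with $z^*_a = \norm{z^*}_\infty > 0$ for $a \in A_{z^*}$), and the upper bound because $g'_a(z^*_a) \le \theta_{z^*}$ by the slope order condition of Theorem~\ref{thm:Stackelberg-structure} --- and $\sum_{a \in A_{z^*}} p_a = K$ with $K \in \N$. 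Hence $p$ lies in the polytope $\mathcal{P} := \{ x \in \R^{A_{z^*}} : 0 \le x_a \le 1 \ \forall a, \ \sum_{a \in A_{z^*}} x_a = K \}$.

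The crux is to show that every vertex of $\mathcal{P}$ is a $0$-$1$ vector. Write $n = |A_{z^*}|$ (note $n \ge K+1$ by Lemma~\ref{lemma:Stackelber1}) and let $x$ be a vertex. Then $x$ is pinned down by $n$ linearly independent tight constraints of the defining system; since only one of these can be the equality $\sum_a x_a = K$, at least $n-1$ of the box constraints $x_a = 0$ or $x_a = 1$ are tight at $x$, so at least $n-1$ coordinates of $x$ lie in $\{0,1\}$. But then $\sum_a x_a = K \in \Z$ forces the one possibly-remaining coordinate to be an integer too, so $x \in \{0,1\}^{A_{z^*}}$ with exactly $K$ ones, i.e. $x = \mathbf{1}_I$ for some $I \in \mathcal{A}_{z^*}^K$. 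Since every point of a polytope is a convex combination of its finitely many vertices, $p = \sum_{I \in \mathcal{A}_{z^*}^K} q^*_I \, \mathbf{1}_I$ for suitable $q^*_I \ge 0$ with $\sum_I q^*_I = 1$; comparing the $a$-th coordinate of the two sides is precisely~\eqref{eq:p-q_consistency}, and $q^* = (q^*_I)$ is the required probability mass function.

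I expect the vertex-integrality claim to be the only step needing any thought; the membership $p \in \mathcal{P}$ and the passage from a convex combination of vertices to a probability distribution are immediate. If a constructive argument is preferred, I would instead describe a finite ``two-coordinate rebalancing'' procedure: as long as the current vector has a fractional coordinate it must have at least two of them (its coordinate sum being the integer $K$), and transferring mass between two such coordinates until one of them reaches $0$ or $1$ strictly reduces the number of fractional coordinates; iterating at most $n$ times extracts the subsets $I$ and peels off their weights $q^*_I$ explicitly. One could also simply invoke the well-known integrality of the base polytope of the uniform matroid $U_{K,n}$, which coincides with $\mathcal{P}$.
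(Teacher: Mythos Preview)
Your argument is correct. The polytope $\mathcal{P}$ you describe is the standard hypersimplex $\Delta_{K,n}$ (equivalently, the base polytope of the uniform matroid $U_{K,n}$), and your vertex-integrality computation is clean and complete; writing $p$ as a convex combination of the vertices then gives the required $q^*$ immediately, with $\sum_I q^*_I = 1$ following automatically because every vertex has coordinate sum $K = \sum_a p_a$.

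This is a genuinely different route from the paper's. The paper recasts~\eqref{eq:p-q_consistency} as a linear system $Rq = p$ (columns of $R$ being the indicator vectors $\mathbf{1}_I$) and invokes Farkas' lemma: assuming a separating $y$ with $R^{\mathsf T} y \ge 0$ and $p^{\mathsf T} y < 0$, it shows that $p^{\mathsf T} y$ is bounded below by the sum of the $K$ smallest entries of $y$, which is itself one of the rows of $R^{\mathsf T} y$ and hence nonnegative, a contradiction. Both proofs ultimately rest on the same two facts about $p$ (coordinatewise in $[0,1]$ and summing to the integer $K$), but yours identifies the feasible region geometrically and reads off the result from vertex integrality, while the paper argues by LP duality. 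Your approach has the side benefit of making a constructive decomposition visible (your ``two-coordinate rebalancing'' is exactly Carath\'eodory-style vertex peeling on the hypersimplex), whereas the Farkas argument is purely existential; on the other hand, the paper's proof is self-contained and does not appeal to the Minkowski--Weyl description of a polytope via its vertices.
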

		The proof of Lemma~\ref{lemma:q_existence}, which uses Farkas' lemma (see~\cite{Boyd}), can be found in the Appendix.
		
		We are now ready to state and prove our main result.
		
		\begin{theorem}[\textbf{Nash Equilibrium Characterization}] 
			\label{thm:ballot_NE}
			A strategy profile $(z^*,q^*),$ where~$z^*$ is the stuffing vector that optimizes~\emph{\ref{eq:Stackelberg2}}, and~$q^*$ is a probability distribution over $K$-sized subsets of~$A_{z^*}$ satisfying~\eqref{eq:p-q_consistency}, is a Nash equilibrium of the Ballot Stuffing Game.
		\end{theorem}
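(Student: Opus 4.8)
The plan is to verify directly the two mutual best-response conditions that characterize a Nash equilibrium of this zero-sum game. Following the reduction noted above, the payoff to \st\ against an inspection strategy $q$ of \ins\ (a distribution over $K$-subsets of $[J]$, with booth-level inspection marginals $p_j := \sum_{I \ni j} q_I$) is the bilinear quantity $f_q(z) := \sum_{j=1}^J z_j(1-p_j)$, which \st\ maximizes and \ins\ minimizes. Let $p^*$ denote the marginals of $q^*$, so that by~\eqref{eq:p-q_consistency} we have $p^*_j = 0$ for $j \notin A_{z^*}$ and $p^*_a = 1 - g'_a(z^*_a)/\theta_{z^*}$ for $a \in A_{z^*}$. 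It then suffices to show (i) $q^*$ minimizes $f_q(z^*)$ over all inspection strategies, and (ii) $z^*$ maximizes $f_{q^*}(z)$ over $\{z \ge 0 : \sum_{j=1}^J g_j(z_j) \le G\}$. For (ii), since the payoff of a mixed stuffing strategy is a convex combination of the payoffs of pure stuffing vectors, it is enough to consider pure deviations by \st.

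Claim (i) is a short computation. For any $q$, $f_q(z^*) = \sum_j z^*_j - \sum_j p_j z^*_j \ge \sum_j z^*_j - K\norm{z^*}_\infty$, using $p_j \ge 0$, $\sum_j p_j = K$, and $z^*_j \le \norm{z^*}_\infty$. By Lemma~\ref{lemma:Stackelber1}, $|A_{z^*}| \ge K+1$ and $z^*$ attains $\norm{z^*}_\infty$ exactly on $A_{z^*}$; since $q^*$ (which exists by Lemma~\ref{lemma:q_existence}) is supported on $K$-subsets of $A_{z^*}$, its marginals vanish off $A_{z^*}$ and sum to $K$ on $A_{z^*}$, so $\sum_j p^*_j z^*_j = K\norm{z^*}_\infty$ and the displayed lower bound is met with equality. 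Hence $q^*$ attains the minimum of $f_q(z^*)$.

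For claim (ii), substituting $p^*$ gives the \emph{linear} objective
\[
f_{q^*}(z) = \sum_{a \in A_{z^*}} \frac{g'_a(z^*_a)}{\theta_{z^*}}\, z_a + \sum_{j \notin A_{z^*}} z_j ,
\]
to be maximized over the convex feasible set. As the objective is linear (hence concave) and the feasible region convex, it suffices to exhibit Lagrange multipliers under which $z^*$ is a KKT point; global optimality then follows. I will take multiplier $\lambda = 1/\theta_{z^*} > 0$ (note $\theta_{z^*} > 0$ since $|A_{z^*}| > K$ and the $g'_j$ are positive) for the budget constraint and verify stationarity coordinate by coordinate, using that $A_{z^*}, B_{z^*}, C_{z^*}$ partition $[J]$: on $A_{z^*}$ stationarity holds identically; on $B_{z^*}$ (where $0 < z^*_j < \norm{z^*}_\infty$) it reads $g'_j(z^*_j) = \theta_{z^*}$, which is exactly the \emph{Structure Slope Condition} of Theorem~\ref{thm:Stackelberg-structure}; on $C_{z^*}$ (where $z^*_j = 0$) the relevant inequality is $1 \le g'_j(0)/\theta_{z^*}$, i.e.\ $\theta_{z^*} \le g'_j(z^*_j)$, the right half of the \emph{Slope Order Condition}. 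Complementary slackness for the budget constraint is the \emph{Sum Condition} $\sum_j g_j(z^*_j) = G$. (The left half of the slope order condition, $g'_a(z^*_a) \le \theta_{z^*}$ for $a \in A_{z^*}$, is precisely what guarantees $p^*_a \in [0,1]$.) Thus $z^*$ maximizes $f_{q^*}$, establishing (ii), and together with (i) this shows $(z^*, q^*)$ is a Nash equilibrium.

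The single load-bearing step is (ii). The point of the theorem is that the structural characterization of $z^*$ in Theorem~\ref{thm:Stackelberg-structure} is calibrated exactly so that, once \ins\ randomizes with the slope-proportional marginals $1 - p^*_a \propto g'_a(z^*_a)$, the linearized best-response program of \st\ has $z^*$ as a KKT point — so the work is to line up the three structural conditions with stationarity on $B_{z^*}$, the sign condition on $C_{z^*}$, and complementary slackness, respectively. Everything else (claim (i), the reduction to pure deviations of \st, the convexity that makes KKT sufficient) is routine.
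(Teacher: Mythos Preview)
Your proof is correct and follows essentially the same approach as the paper: both verify the two best-response conditions, dispatching \ins's side quickly (any inspection supported on $A_{z^*}$ is optimal against $z^*$) and then checking first-order optimality of $z^*$ against the weighted linear objective induced by the marginals $p^*$, using the three structural conditions of Theorem~\ref{thm:Stackelberg-structure}. The only cosmetic difference is that the paper first applies the change of variables $\hat z_a=(1-p_a)z_a,\ \hat g_a(\hat z_a)=g_a(\hat z_a/(1-p_a))$ to recast \st's best-response problem as an instance of \ref{eq:Stackelberg2} with $K=0$ and then invokes the water-filling condition $\hat g'_a(\hat z^*_a)=\theta_{z^*}=\hat g'_b(\hat z^*_b)\le \hat g'_c(\hat z^*_c)$, whereas you write out the KKT system directly with multiplier $\lambda=1/\theta_{z^*}$; the underlying computation is identical.
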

		
		It is interesting to note that the Nash equilibria of the ballot stuffing game identified in Theorem~\ref{thm:ballot_NE} involve:
		\begin{itemize}
			\item no randomization on the part of \st\ (whose action space is continuous),\footnote{Indeed, it can be shown that given any (possibly randomized) strategy of \ins, there exists a non-randomized optimal best response for \st.} but
			\item  a delicate, marginal-cost-aware, randomization on the part of \ins\ (whose action space is discrete).
		\end{itemize}
		Additionally, a computation of the equilibrium of the Stackelberg variant of the Ballot Stuffing Game considered in Section~\ref{sec:Stackelberg}, or equivalently, a computation of the optimal solution of~\ref{eq:Stackelberg2}, enables a computation of the Nash equilibria identified in Theorem~\ref{thm:ballot_NE}; we describe a tractable algorithm for solving~\ref{eq:Stackelberg2} in Section~\ref{sec:algo}. The remainder of this section is devoted to the proof of Theorem~\ref{thm:ballot_NE}.
		
		
		\begin{proof}[Proof of Theorem~\ref{thm:ballot_NE}]
			Consider an action profile~$(z^*,q^*)$ as prescribed in the statement of the theorem. To prove that this action profile constitutes a Nash equilibrium, it suffices to show that neither player has an incentive for unilateral deviation. Given the action~$z^*$ on part of \st, it is easy to see that $q^*$ is a best response on part of \ins, since it places all~$K$ inspectors at booths in $A_{z^*}$ (albeit, via a randomized booth selection).
			
			It therefore suffices to fix the (randomized) action $q^*,$ and show that $z^*$ is a best response for \st, in that it maximizes the \emph{average} number of stuffed votes. 
			\ignore{
				Suppose we run the optimization problem \ref{eq:opt-problem2} and get their optimal values with the corresponding cost functions as 
				\begin{align}
					g_{a_1}(z_a^*),..g_{a_{|A|}}(z_a^*), g_{b_1}(z_{b_1}^*),..g_{b_{|B|}}(z_{b_{|B|}}^*),
					g_{c_1}(0),..g_{c_{|C|}}(0) \nonumber
				\end{align}
				To prove that these strategies form a Nash Equilibrium, we need to fix each strategy and show that changing the other strategy unilaterally cannot result in a better outcome. $P_1$'s strategy is deterministic, so $P_A$ needs to select the maximum $K$ ballots, which it does. Since $P_A$ uses a mixed strategy, $P_1$ needs to maximize the expected number of stuffed votes. This can be formulated as an optimization problem:
			} 
			Given $q^*,$ the best response of \st\ is posed as the following optimization.
			\begin{align}
				\max_{z} \quad & \sum_{a \in A_{z^*}} (1 - p_{a}) z_{a} + \sum_{b \in B_{z^*}} z_{b} + \sum_{c \in C_{z^*}} z_{c} \nn \\
				\text{subject to} \quad & \sum_{j=1}^J g_j(z_j) \leq G 
				\label{opt:stuffer_best_response}
			\end{align}
			Note that the vote stuffing planned in any booth~$a \in A_{z^*}$ materializes with probability~$(1-p_a).$ The optimization problem~\eqref{opt:stuffer_best_response} is similar to \ref{eq:Stackelberg2} in the no-inspector case ($K=0$). Indeed, consider the transformation:
			\begin{align*}
				\hat{z}_{j} &= (1 - p_{j}) z_{j} \quad  \text{ for } j \in A_{z^*},\\
				\hat{z}_{j} &= z_j  \quad \text{ for } j \notin A_{z^*},\\
				\hat{g}_{j}(\hat{z}_{j}) &= g_j\left(\frac{\hat{z}_{j}}{1-p_j} \right) \quad  \text{ for } j \in A_{z^*},\\
				\hat{g}_{j}(\hat{z}_{j}) &= g_{j}(\hat{z}_{j}) \quad \text{ for } j \notin A_{z^*}.
			\end{align*}
			The optimization~\eqref{opt:stuffer_best_response} is therefore equivalent to:
			\begin{align}
				\max_{\hat{z}} \quad & \sum_{j} \hat{z_j} \nn \\
				\text{subject to} \quad & \sum_{j} \hat{g}_j(\hat{z}_j) \leq G       \label{opt:stuffer_best_response_tx}
			\end{align} 
			Showing that $z^*$ is optimal for~\eqref{opt:stuffer_best_response} is thus equivalent to showing that $\hat{z}^*$ is optimal for ~\eqref{opt:stuffer_best_response_tx}. We now prove the latter.
			Note that for $a \in A_{z^*},$ 
			\begin{align}
				\hat{g}'_a(\hat{z}^*_j) &= \frac{1}{1-p_a} g'_a\left(\frac{\hat{z}^*_a}{1-p_a}\right) = \frac{1}{1-p_a} g'_a(z^*_a) \nn \\
				&= \theta_{z^*}. \label{eq:stuffer_best_response_tx_1}
			\end{align}
			Here, we have used the fact that $(1-p_a) \propto g'_a(z^*_a).$ It therefore follows that $$\hat{g}'_a(\hat{z}^*_a) \stackrel{(i)}{=} \hat{g}'_b(\hat{z}^*_b) \stackrel{(ii)}{\leq} \hat{g}'_c(\hat{z}^*_c) \  \forall a \in A_{z^*}, b \in B_{z^*}, c \in C_{z^*}.$$ Here, $(i)$ follows from \eqref{eq:stuffer_best_response_tx_1} and the \emph{structure slope condition} satisfied by $z^*,$ and $(ii)$ follows from the \emph{slope order condition} satisfied by $z^*.$ This proves that $\hat{z}^*$ is optimal for ~\eqref{opt:stuffer_best_response_tx}, which completes the proof.
		\end{proof}
		
		\ignore{
			\begin{lemma}
				$z^*$ is a valid solution to the above optimization problem.
			\end{lemma} 
			We know that the solution to the above problem adheres to the \textbf{Optimal Structure}. Specifically, when \( K = 0 \), \textbf{Theorem 2} decomposes to all the marginal values in sets \( A \) and \( B \) to be equal and less than those in set \( C \) and \textbf{Sum condition}. To verify that \( z^* \) is a valid solution, we need to demonstrate that it satisfies above two conditions.
			
			We start by confirming that the marginals within sets \( A \) and \( B \) are equal:
			\begin{align*}
				\textrm{TPT }\frac{1}{1 - P_{a_1}} g'_{a_1}(z^*_{a_1}) &=  \ldots = \frac{1}{1 - P_{a_{|A|}}} g'_{a_{|A|}}(z^*_{a_{|A|}}) \\
				=\ g'_{b_1}(z^*_{b_1}) = & \ g'_{b_2}(z^*_{b_2}) = \ldots = g'_{b_{|B|}}(z^*_{b_{|B|}})
			\end{align*}
			
			\begin{align*}
				\Leftrightarrow \frac{\sum_{j=1}^{|A|} g'_{a_j}(z^*_a)}{|A| - K} \frac{g'_{a_1}(z^*_a)}{g'_{a_1}(z^*_a)} &= \ldots = \frac{\sum_{j=1}^{|A|} g'_{a_j}(z^*_a)}{|A| - K} \frac{g'_{a_{|A|}}(z^*_a)}{g'_{a_{|A|}}(z^*_a)} \\
				=\ g'_{b_1}(z^*_{b_1}) = & \ g'_{b_2}(z^*_{b_2}) = \ldots = g'_{b_{|B|}}(z^*_{b_{|B|}})
			\end{align*}
			The above is true by the \textbf{Structure Slope condition} of the optimization problem \ref{eq:opt-problem1}.
			
			Furthermore, $g'_{b}(z^*_{b})>g'_{c}(z^*_{c})$ again using the \textbf{Structure Slope condition}. $z^*$ also satisfies the \textbf{Sum condition} because it was a solution of \ref{eq:opt-problem1}. Thus, $z^*$ is indeed the optimal solution.
		} 

		\ignore{
			\begin{remark}
				The strategy of $P_A$ can be formulated as a Linear Program in terms of $P_{a_i}$ whose existence can be proved using this Lemma 4.
			\end{remark}
		} 
		
		\section{Algorithm for computing~$z^*$}
		\label{sec:algo}
		In this section we present an efficient algorithm to solve~\ref{eq:Stackelberg2}, which exploits the structure of its optimal solution (see Theorem~\ref{thm:Stackelberg-structure}). To simplify the exposition, we make the following assumption on the collection of cost functions.
		
		\begin{assumption}[Cost Ordering]
			\label{assumption:mono_constraints}
			There exists an ordering of the booths, such that the marginal stuffing costs are ordered as follows:
			\begin{align}
				\label{eq:ordering1}
				g_1'(z) \leq g_2'(z) \leq g_3'(z) \leq \cdots \leq g_J'(z) \ \forall\ z
			\end{align}
			Additionally, the marginal stuffing cost at zero is zero, i.e., 
			\begin{align}
				\label{eq:ordering2}
				g_1'(0) = g_2'(0) = g_3'(0)  = \ldots = g_J'(0)=0.
			\end{align}
		\end{assumption}
		The condition~\eqref{eq:ordering1} states that the cost function marginals do not `intersect.' Under this ordering, note that for booths~$i$ and $j,$ where $i < j,$ booth~$i$ is `cheaper' to stuff than booth~$j.$ The condition~\eqref{eq:ordering2} implies that the equilibrium stuffing vector~$z^*$ is strictly positive on all coordinates, i.e., $C_{z^*} = \emptyset.$ Additionally, Assumption~\ref{assumption:mono_constraints} implies that the sets~$A_{z^*}$ and $B_{z^*}$ have the following structure.
		
		\begin{lemma}\label{lemma:A-Bstructure}
			Under Assumption~\ref{assumption:mono_constraints}, there exists $\ell$ satisfying $K+1 \leq \ell \leq J,$ such that $A_{z^*} = \{1,\ldots,\ell\},$ and $B_{z^*} = \{\ell+1,\ldots,J\}.$
		\end{lemma}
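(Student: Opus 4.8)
The plan is to reduce the claim to showing that $A_{z^*}$ is an \emph{initial segment} of the booths in the ordering of Assumption~\ref{assumption:mono_constraints}. By Lemma~\ref{lemma:Stackelber1}, $z^*\neq 0$ and $|A_{z^*}|\ge K+1$, so $z_a^*=\norm{z^*}_{\infty}>0$ and $g_a'(z_a^*)>0$ for every $a\in A_{z^*}$, whence $\theta_{z^*}>0$; moreover, since $|A_{z^*}|>K$, Theorem~\ref{thm:Stackelberg-structure} is applicable and $z^*$ satisfies its three structural conditions. Since~\eqref{eq:ordering2} forces $g_j'(z_j^*)=0$ at any booth with $z_j^*=0$ while $\theta_{z^*}>0$, the \emph{slope order condition} precludes $C_{z^*}\ne\emptyset$, so $\{A_{z^*},B_{z^*}\}$ partitions $[J]$. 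Consequently it suffices to show that $j\in A_{z^*}$ and $i<j$ imply $i\in A_{z^*}$: then $A_{z^*}=\{1,\ldots,\ell\}$ with $\ell:=|A_{z^*}|$, $B_{z^*}=[J]\setminus A_{z^*}=\{\ell+1,\ldots,J\}$, and $K+1\le\ell\le J$ by Lemma~\ref{lemma:Stackelber1}.

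To prove this downward-closure property I would argue by contradiction. Suppose $j\in A_{z^*}$, $i<j$, but $i\notin A_{z^*}$; as $C_{z^*}=\emptyset$ this forces $i\in B_{z^*}$, so $0<z_i^*<\norm{z^*}_{\infty}=z_j^*$. The \emph{structure slope condition} gives $g_i'(z_i^*)=\theta_{z^*}$ (since $i\in B_{z^*}$), and the \emph{slope order condition} gives $g_j'(z_j^*)\le\theta_{z^*}$ (since $j\in A_{z^*}$). The non-crossing hypothesis~\eqref{eq:ordering1}, evaluated at the point $z_j^*$, gives $g_i'(z_j^*)\le g_j'(z_j^*)$, and strict convexity of $g_i$ together with $z_i^*<z_j^*$ gives $g_i'(z_i^*)<g_i'(z_j^*)$. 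Concatenating,
\[
\theta_{z^*}=g_i'(z_i^*)<g_i'(z_j^*)\le g_j'(z_j^*)\le\theta_{z^*},
\]
which is absurd. Hence no such pair exists, $A_{z^*}$ is downward closed, and the lemma follows.

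The only delicate points are (i) verifying a priori that Theorem~\ref{thm:Stackelberg-structure} is applicable to $z^*$ — equivalently, that $z^* \neq 0$, $|A_{z^*}|>K$, and $\theta_{z^*}>0$ — and that $C_{z^*}=\emptyset$, which together make the three structural conditions available and genuinely reduce the problem to the ordering of $A_{z^*}$; and (ii) invoking~\eqref{eq:ordering1} at the \emph{shared} argument $z_j^*$ rather than comparing $g_i'$ and $g_j'$ at their own distinct arguments — it is precisely the strict monotonicity of $g_i'$ (from strict convexity of $g_i$) that bridges the remaining gap. I expect no further obstacle; the argument is a short chaining of the two slope conditions of Theorem~\ref{thm:Stackelberg-structure} with the cost ordering, and it automatically handles the degenerate case $\ell=J$ (i.e.\ $B_{z^*}=\emptyset$), where the structure slope condition is vacuous.
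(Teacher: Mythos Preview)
Your argument is correct: you apply Theorem~\ref{thm:Stackelberg-structure} to $z^*$, use~\eqref{eq:ordering2} together with $\theta_{z^*}>0$ to eliminate $C_{z^*}$, and then derive the downward-closure of $A_{z^*}$ from the slope conditions and the non-crossing hypothesis~\eqref{eq:ordering1}, with strict convexity supplying the strict inequality needed for the contradiction. The paper itself does not supply a proof of Lemma~\ref{lemma:A-Bstructure} --- it is stated as an immediate consequence of Assumption~\ref{assumption:mono_constraints} and left unproved in both the main text and the appendix --- so there is no alternative approach to compare against; your write-up is precisely the natural verification one would expect, and it handles the corner case $B_{z^*}=\emptyset$ cleanly.
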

		
		
		\ignore{
			Using these cost functions we solve our \ref{eq:Stackelberg2} for $G,\ K$ as our budget and number of inspectors respectively. This kind of non-intersecting structure of $g'_i(z_i)$ makes our problem more easier as it divides the sets $A$ and $B$ into two separate.
			\begin{remark}
				The set $C_z$ will be empty because we have taken our initial marginal values to be zero hence no station can remain empty.
			\end{remark}
		} 
		
		Lemma~\ref{lemma:A-Bstructure} implies that from an algorithmic standpoint, the search for $A_{z^*}$ is restricted to $J - K$ possibilities. While it is possible to perform an exhaustive search over these possibilities to determine $z^*$ (as shown below), it turns out that one can further exploit the structure of $z^*$ to make the search for $A_{z^*}$ more efficient.
		At the heart of the proposed algorithm is the routine \calcZ, which is stated formally as Algorithm~\ref{alg:calc_z}. This routine takes as input a candidate choice of $|A|,$ and returns a candidate~stuffing vector $\mathbf{z}$ along with a certain status flag.
		\begin{algorithm}[H]
			\caption{Compute candidate $\mathbf{z}$ and give error status}
			\label{alg:calc_z}
			\textbf{Function: \calcZ}\\
			\textbf{Initialize:} $g_i(z_i), G, K$\\
			\textbf{Input:} $|A|$ \\
			\textbf{Output:} $\mathbf{z}, \text{status}$ 
			\begin{algorithmic}
				\STATE Solve \eqref{eq:Solve_za} for $z_a$
				\STATE Compute $\mathbf{z}$ using \eqref{eq:candidate_z}
				\IF{$z_{|A|+1}>z_{a}$}
				\STATE \textbf{status} $\gets$ Error~P
				\ELSIF{$g'_{|A|}(z_{a})> \frac{\sum_{i=1}^{|A|} g'_{i}(z_a)}{|A|-K}$}  
				\STATE \textbf{status} $\gets$ Error~Q
				\ELSE
				\STATE \textbf{status} $\gets$ correct
				\ENDIF
				\STATE \textbf{Return} $\mathbf{z}, \text{status}$
			\end{algorithmic}
		\end{algorithm}
		Specifically, this routine first computes the unique positive scalar~$z_a$ satisfying 
		\begin{align}
			\sum_{i=1}^{|A|} g_{i}(z_a) + \sum_{i=|A|+1}^{J} g_{b}((g'_{i})^{-1}({\theta(z_a)})) = G,
			\label{eq:Solve_za}
		\end{align}
		where $$\theta(z_a) := \frac{\sum_{i=1}^{|A|} g'_{i}(z_a)}{|A|-K}.$$ It then uses $z_a$ to construct the candidate stuffing vector~$\mathbf{z}$ defined as: 
		\begin{equation}
			\label{eq:candidate_z}
			\mathbf{z}_i = \left\{
			\begin{array}{cl}
				z_a & \text{ for } 1 \leq i \leq |A|, \\
				(g'_{i})^{-1}({\theta(z_a)}) & \text{ for } i > |A|.
			\end{array}
			\right.
		\end{equation}
		Finally, \calcZ\ checks whether this vector $\mathbf{z}$ satisfies the following conditions:
		\begin{enumerate}
			\item $z_a \geq \mathbf{z}_{|A|+1},$
			\item $g'_{|A|}(z_a) \leq \theta(z_a).$
		\end{enumerate}
		\begin{remark}
			When $|A| = J$, condition 1 is undefined, and when $|A| = K$, condition 2 is undefined. We ensure that the algorithm avoids evaluating these conditions in these corner cases.
		\end{remark}
%
		
		\begin{algorithm}
			\caption{Algorithm for solving \ref{eq:Stackelberg2}}
			\label{alg:monocon}
			\textbf{Input:} $g_i(\cdot), G, J, K$ \\
			\textbf{Output:} $z^*$ \\
			\textbf{Variables:} $x,y,|A|$ \\
			\textbf{Initialize:} $x=K+1,y=J$
			\begin{algorithmic}[1]
				\WHILE{True}
				\IF{$|x-y|\leq 1$}
				\STATE $\mathbf{z}$, status $\gets$ $\text{\calcZ}(y)$ 
				\IF{status = \textbf{correct}}
				\STATE Return $\mathbf{z}$
				\STATE \textbf{Break}
				\ELSE
				\STATE $\mathbf{z}$, status $\gets$ $\text{\calcZ}(x)$ 
				\STATE Return $\mathbf{z}$
				\STATE \textbf{Break}
				\ENDIF
				\ELSE
				\STATE $|A| \gets \lceil{\frac{x+y}{2}}\rceil$
				\STATE $\mathbf{z}, \text{status} \gets \text{\calcZ}(|A|)$
				\IF{$\text{status} = \text{correct}$}
				\STATE \textbf{Return} $\mathbf{z}$
				\STATE \textbf{break}
				\ELSIF{$\text{status} = \text{Error P} $}
				\STATE $x \gets |A|$
				\ELSIF{$\text{status} = \text{Error Q}$}
				\STATE $y \gets |A|$
				\ENDIF
				\ENDIF
				\ENDWHILE
			\end{algorithmic}
		\end{algorithm}	
		If Conditions~1 and 2 above both hold, it follows from Theorem~\ref{thm:Stackelberg-structure} that $\mathbf{z}$ is optimal. In this case, \calcZ\ returns the status `correct.' Else, if Condition~1 is violated, it returns the status `Error~P,' and if Condition~2 is violated, it returns the status `Error~Q.' (It is easy to show that both Conditions~1 and~2 cannot be violated.) 
		
		As such, note that one can solve \ref{eq:Stackelberg2} by simply querying the routine \calcZ\ for the different possibilities of $|A|,$ ranging from $K+1$ to $J.$ However, it is possible to exploit the tag of the error status (i.e., P or Q) to check whether the queried value of $|A|$ is larger, or smaller, than $|A_{z^*}|.$
		\begin{lemma}
			\label{lemma: PQerrors}
			Under Assumption~\ref{assumption:mono_constraints}, 
			\begin{itemize}
				\item if \calcZ\ returns`Error~P,' then $|A| < |A_{z^*}|,$
				\item if \calcZ\ returns`Error~Q,' then $|A| > |A_{z^*}|.$
			\end{itemize}
		\end{lemma}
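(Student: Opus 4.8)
The plan is to pin down how the queried value $m := |A|$ compares with $\ell := |A_{z^*}|$ by combining three ingredients: an \emph{anchor} at $m=\ell$, the fact that the two error flags are mutually exclusive, and a \emph{monotonicity} statement describing how the validity of Conditions~1 and~2 changes with $m$. For the anchor, Lemma~\ref{lemma:A-Bstructure} gives $A_{z^*}=\{1,\dots,\ell\}$, $B_{z^*}=\{\ell+1,\dots,J\}$, $C_{z^*}=\emptyset$. Specialized to this partition, the \emph{structure slope} and \emph{sum} conditions of Theorem~\ref{thm:Stackelberg-structure} say exactly that $\norm{z^*}_\infty$ is the scalar solving~\eqref{eq:Solve_za} for $|A|=\ell$ and that the vector built from~\eqref{eq:candidate_z} equals $z^*$; and the \emph{slope order} condition gives $g'_\ell(\norm{z^*}_\infty)\le\theta_{z^*}$ (Condition~2 at $m=\ell$) together with $g'_{\ell+1}(\norm{z^*}_\infty)>\theta_{z^*}$ (Condition~1 at $m=\ell$, strictly, since $\ell+1\in B_{z^*}$ forces $z^*_{\ell+1}<\norm{z^*}_\infty$). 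Hence $\calcZ(\ell)$ returns ``correct.'' I would also record the elementary fact that Conditions~1 and~2 are never both violated: if $g'_m(z_a)>\theta(z_a)$, then by the cost ordering~\eqref{eq:ordering1} $g'_{m+1}(z_a)\ge g'_m(z_a)>\theta(z_a)$, so Condition~1 holds.

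Next I would show that $\calcZ(m)$ returns ``correct'' \emph{only} when $m=\ell$. Indeed, whenever the status is ``correct'' the candidate vector $\mathbf{z}$ has $C_{\mathbf z}=\emptyset$ by~\eqref{eq:ordering2}, has $A_{\mathbf z}=\{1,\dots,m\}$ and $B_{\mathbf z}=\{m+1,\dots,J\}$ (Condition~1 together with strict convexity of the $g_i$ places the tail below $z_a$), satisfies the structure slope and sum conditions by construction, and satisfies the slope order condition because Condition~2 and~\eqref{eq:ordering1} give $g'_j(z_a)\le g'_m(z_a)\le\theta(z_a)$ for all $j\le m$. By the ``if'' direction of Theorem~\ref{thm:Stackelberg-structure} and uniqueness, $\mathbf z=z^*$, hence $m=|A_{z^*}|=\ell$. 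Consequently, for every $m\neq\ell$ the status is Error~P or Error~Q, and the task reduces to deciding which one occurs.

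The crux is the monotonicity statement: \emph{Condition~1 holds for every $m\in\{\ell,\dots,J\}$, and Condition~2 holds for every $m\in\{K+1,\dots,\ell\}$.} Granting this, the lemma follows immediately: if $m<\ell$ then Condition~2 holds, so the status is not Error~Q, and being neither ``correct'' (by the previous paragraph) nor Error~Q it must be Error~P; symmetrically, if $m>\ell$ then Condition~1 holds, so the status cannot be Error~P and must be Error~Q. To establish the monotonicity I would do comparative statics on~\eqref{eq:Solve_za}. Writing $z_a(m)$ and $t(m):=\theta(z_a(m))$ for the outputs on input $m$, the defining relations are $\sum_{i\le m} g'_i(z_a(m)) = (m-K)\,t(m)$ and $\sum_{i\le m} g_i(z_a(m)) + \sum_{i> m} g_i\big((g'_i)^{-1}(t(m))\big)=G$. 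Incrementing $m$ to $m+1$ moves booth $m+1$ out of the ``tail'' (value $(g'_{m+1})^{-1}(t)$) into the ``head block'' (value $z_a$); a short computation shows that the left side of the first relation and the left side of the budget relation both shift in the same direction, with sign controlled by $g'_{m+1}(z_a(m))-t(m)$, i.e.\ by whether Condition~1 currently holds. Combining this with the non-crossing of the marginals~\eqref{eq:ordering1} and the anchor at $m=\ell$, one argues that Condition~1 once valid stays valid as $m$ increases and Condition~2 once valid stays valid as $m$ decreases, which is exactly the displayed monotonicity.

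I expect this last comparative-statics step to be the main obstacle: one has to extract a monotone response of the \emph{coupled} pair $(z_a(m),t(m))$ to the discrete parameter $m$ from the two implicit relations above, and it is precisely here that Assumption~\ref{assumption:mono_constraints} is used --- without the non-intersection of the marginal-cost curves, $z_a(m)$ need not be monotone in $m$ and the head/tail partition need not be the prefix $\{1,\dots,m\}$ supplied by Lemma~\ref{lemma:A-Bstructure}, so the ``larger/smaller than $|A_{z^*}|$'' dichotomy could break. As a heuristic cross-check of the direction, note that Error~P, $g'_{m+1}(z_a(m))<t(m)$, says that booth $m+1$ has marginal stuffing cost below the structure slope and therefore ``wants'' to join the top set, which should force $|A_{z^*}|\ge m+1$; whereas Error~Q, $g'_m(z_a(m))>t(m)$, says booth $m$ is too expensive to sit in the top set, forcing $|A_{z^*}|\le m-1$.
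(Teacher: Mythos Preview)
Your plan is correct and takes essentially the same route as the paper: both establish the P/correct/Q block structure by analyzing how the candidate vector changes as the queried $|A|$ moves by one. The paper's version of your ``comparative statics on~\eqref{eq:Solve_za}'' is a lemma that the scalar $z_a(m{+}1)$ always lies between $z_a(m)$ and $\mathbf z_{m+1}(m)$; from this it deduces that Error~Q cannot immediately follow Error~P, that Error~Q follows the correct value, and that Error~Q follows Error~Q---which is precisely your monotonicity statement read one step at a time. The one organizational difference is that you first prove uniqueness of the ``correct'' status by feeding the candidate vector back through the ``if'' direction of Theorem~\ref{thm:Stackelberg-structure} (a clean move the paper does not make explicit), whereas the paper recovers uniqueness only as a by-product of the block structure. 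Your acknowledged obstacle---extracting the monotone response of $(z_a(m),\theta(m))$ from the two implicit relations---is exactly the content of the paper's adjacent-step lemmas, so your sketch is pointed in the right direction.
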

		Lemma~\ref{lemma: PQerrors} enables a binary search for the correct value of~$|A|$ over the range $K+1$ to $J;$ this requires at most $\log_2(J - K)$ queries of \calcZ. The corresponding algorithm, which provably solves~\ref{eq:Stackelberg2} under Assumption~\ref{assumption:mono_constraints}, is stated formally as Algorithm~\ref{alg:monocon}.
		
		\ignore{
			
			The computed $z$ may not be optimal initially because it might not satisfy all the conditions of Theorem \ref{thm:Stackelberg-structure}. We adjust $|A|$ iteratively until it reaches the correct value.\\
			The computed $z$ from equation \ref{alg:calc_z} can exhibit the following violations:
			\begin{enumerate}
				\item For some  $b \in \{|A|+1, \ldots J \}$, the value of $z_b > z_a$.
				\item For some $a \in \{1, \ldots |A|\}$, $g'_a(z_a) > \theta_z$
			\end{enumerate}
			\begin{remark}
				We will prove ahead that both violations cannot occur together.
			\end{remark}
			With respect to the above violations our candidate $A_z$ can lie in these
			three scenario's.
			\begin{enumerate}
				\item \textbf{Correct A}: We hit the right value of $|A_z^*|$
				which will satisfy all the conditions of optimality.
				\item \textbf{Underestimate A} (Error P): The condition that $z_b$ is less than $z_a$
				will be violated here for some $z_b$'s. In this violation, $z_{b_1}>z_a$.
				\item \textbf{Overestimate A} (Error Q): The condition that $g'_a({z_a})$ is less than Structure Slope will be violated here for some $z_a$'s. In this violation, $g'_{a_{|A|}}(z_a)>\theta_z$.
			\end{enumerate}
			Algorithm \ref{alg:calc_z} calculates returns the calculated $z$ from equation \ref{eq:Solve_za} and which scenario it lies in. Algorithm \ref{alg:monocon} maneuvers to the correct $A_z$, which we prove using following lemmas. 
			\begin{lemma}
				Both Error $P$, $Q$ cannot occur together.
			\end{lemma}
			\begin{proof}
				If both violation occur then, $g'_{a_{|A|}}(z_a)>\theta_z = g'_{b_1}(z_{b_1})$ and $z_{b_1}>z_a$ which contradicts our monotonic constraints from Assumption \ref{assumption:mono_constraints}
			\end{proof}
			\begin{lemma}
				When $|A|$ is chosen $K+1$, it will give us the correct solution or Error
				P. When $|A|$ is chosen $J$, it will give us the correct solution or Error Q.
			\end{lemma}
			\begin{proof}
				When $|A|= K + 1$, $\theta_z = \sum_{a=1}^{|A|}
				g'_{z_a}(z_a)$, implying that $g'_{z_a}(z_a) < \theta_z$ for any $a
				\in \{1,2,\ldots|A|\} $ hence violation Q cannot occur. Conversely, when $|A| = J$,
				all $z$ values are equal hence violation P cannot occur.
			\end{proof}		
			In the following lemmas we analyze what violations can occur when we choose adjacent $|A|$. By analyzing what errors can occur next to each other we can conclude errors they occur in block like structure as well. We get block of error P next to each other, then the correct $|A|$, then a block of error Q.\\
			
			We define a notation to look at the these two adjacent candidate $|A|,|A|+1$. Let us denote $|A|$ by $l$. Let $z$ values calculated from equation \ref{eq:Solve_za} for chosen $l$ be $z_{1,1}, z_{1,2}, \ldots, z_{1,J}$ and for chosen $l+1$ be $z_{2,1}, z_{2,2}, \ldots, z_{2,J}$. Let $z_a$ from $l$ and $l+1$ be represented by $u$ and $v$, respectively, and let the value of $z_{1,l+1}$ be represented by $u_{l+1}$. 
			\begin{lemma}
				$v$ lies between $u,u_{l+1}$.
			\end{lemma}
			\begin{proof}
				To prove this by contradiction, assume that $v$
				does not lie between $u$ and $u_{l+1}$. Hence, it can be either
				greater than both $u$ and $u_{l+1}$ or less than both values. If $v$
				is greater than both $u$ and $u_{l+1}$, the new structure's slope will
				be more than the original one, $\theta_{l+1} \geq \theta_{l}$. Consequently, the values of other
				ballots ($z_{2,l+2}, z_{2,l+3}, \ldots, z_{2,J}$) will also increase due to the monotonic
				functions. However, this leads to a contradiction because the values
				of all ballots cannot increase, given that
				$\sum_{i=1}^{J}g_i'(z_i)=G$. Same way we can say $v$ cannot be less
				than both $u$ and $u_{l+1}$.
			\end{proof}
			\begin{lemma}
				Violation Q cannot occur right adjacent to P.
			\end{lemma}
			\begin{proof}
				To prove this by contradiction, assume $l$ exhibit error P and $l+1$ exhibit error Q.
				\begin{align*}
					&\text{Using violation equations we have:} \\
					&u<v<u_{l+1}\\
					&\frac{\sum_{i=1}^{l}g_i'(u)}{l-K} = g_{l+1}'(u_{l+1}) \ \ (\text{Structure slope condition of } l)\\
					&\frac{\sum_{i=1}^{l+1}g_i'(v)}{l+1-K} < g_{l+1}'(v) \ \   (\text{Violation condition of error Q})\\
					&\Rightarrow \sum_{i=1}^{l}g_i'(v) < (l-K)g_{l+1}'(v) \\
				\end{align*}
				\begin{align*}
					&\Rightarrow \frac{\sum_{i=1}^{l}g_i'(v)}{l-K} < g_{l+1}'(v)\\
					&\text{Using } u < v \text{ we get:}\\
					&\frac{\sum_{i=1}^{l}g_i'(u)}{l-K} <
					\frac{\sum_{i=1}^{l}g_i'(v)}{l-K} < g_{l+1}'(v)\\  
					&\Rightarrow g_{l+1}'(u_{l+1}) < g_{l+1}'(v)\\ 
					&\text{But this is a contradiction because } v < u_{l+1}
				\end{align*}
				So we get from above that P cannot directly go to Q, hence we must
				have a solution in between them or Error Q does not occur for any $l$.
			\end{proof}
			\begin{lemma}
				Violation Q occurs to the right of solution.
			\end{lemma}
			\begin{proof}
				Assume $l$ be the correct choice, we need to prove that $l+1$ will be violation Q. Here we have $u>v>u_{l+1}$.			
				\begin{align*}
					&\text{Structure slope of $l$ is } \frac{\sum_{i=1}^{l}g_i'(u)}{l-K} = g_{l+1}'(u_{l+1})\\
					&\text{Structure slope of $l+1$ is } \frac{\sum_{i=1}^{l+1}g_i'(v)}{l+1-K}\\
					&\text{TPT } g'_{l+1}(v) > \frac{\sum_{i=1}^{l+1}g_i'(v)}{l+1-K}\\
					&\Leftrightarrow g'_{l+1}(v) > \frac{\sum_{i=1}^{l}g_i'(v)}{l-K}\\
					& g'_{l+1}(v) > g'_{l+1}(u_{l+1}) \text{ since } v>u_{l+1}\\
					& \frac{\sum_{i=1}^{l}g_i'(u)}{l-K}>\frac{\sum_{i=1}^{l}g_i'(v)}{l-K}  \text{ since } u>v\\
					& \text{From first line we have } \frac{\sum_{i=1}^{l}g_i'(u)}{l-K} = g_{l+1}'(u_{l+1})
				\end{align*}
			\end{proof}
			\begin{lemma}
				Violation Q occurs to the right of Q.
			\end{lemma}
			
			\begin{proof}
				We have $l$ as violation Q and we will prove that $l+1$ will also be violation Q. We have $u>v>u_{l+1}$. This proof can be done in the same way as done above because above we are only using the fact that
				$u>v>u_{l+1}$ and structure slope is equal to $g'_{l+1}(u_{l+1})$.
			\end{proof}
			We conclude the block structure of errors from above lemma which also proves the correctness of our binary search type algorithm in the following theorem.
			\begin{theorem}[\textbf{Monotonic Algorithm}]
				\label{thm:monoAlgo}
				\hs{not sure how to write this}
			\end{theorem}
			
		}
		
		\section{Parliamentary Model}
		\label{sec:Parliamentary}
		In general, the Parliamentary model is harder to solve because because the different booths can have different weights, and different probabilities of winning. Clearly, it would be more beneficial to stuff in close contests rather than in `surer' ones. In the following, we present an approximation to transform it to a version of the Plebiscite model that we considered above.  Specifically, consider the problem of $P_1$ that is written as
		\begin{align*}
			&\max_{z} \ \phi(z,y)  = \prob{ \ \left( \sum_{j=1}^J \Ind
				{ (X_j+(1-y_j)z_j) > 0 } \right) > J/2 }.
		\end{align*}
		Under a Markov inequality relaxation we can write this as 
		maximization as being approximately the same 
		\begin{align*}
			& \max_z \EXP{\sum_{j=1}^J \Ind
				{ (X_j+(1-y_j)z_j) > 0 }},
		\end{align*}  
		which is the same as 
		\begin{align*}
			\max_z \sum_{j=1}^J \prob{(X_j+(1-y_j)z_j) > 0}
		\end{align*}
		The probability terms in the summation is a tail distribution of a Gaussian which we can approximate by function $f_j$ for  station $j.$ Assume $f_j$ is a concave increasing function with $f_j(0)=0$. Define $w_j:=f_j(z_j)$. Now our objective function becomes
		\begin{align*}
			&\max_z \sum_{j=1}^J f_j(z_j)(1-y_j)\\
			\Leftrightarrow &\max_z \sum_{j=1}^J w_j(1-y_j)
		\end{align*}	
		So now our Parliamentary model can be represented as a Plebiscite model in terms of $w_j$ as our stuffing votes and $g_j(f_j^{-1})$ as our new cost functions.
		\begin{align}
			& \max_{z} \ & \quad & \ \sum_{j=1}^J w_j(1-y_j) \nonumber \\
			& \mbox{subject to}  && \sum_{j=1}^J g_j(f_j^{-1}(w_j)) \ \leq \ G, \nonumber \\
			& && \sum_{j=1}^{J} y_j=K,
			\label{eq:Parl-to-Pleb}
		\end{align}
		
		\section{Numerical Experiments}
		\label{sec:Numerical}
		We now provide some numerical illustrations. First consider the case of $J=5$ and a budget $G.$ Let the cost functions for stuffing in the five booths be $z_1^4,$ $z_2^4,$ $2z_3^4,$ $5z_4^2,$ and  $15z_5^2.$ We obtain the stuffing vector for different $G$ and $K.$ 
		Since the cost functions are `intersecting', we cannot use the algorithms of Section~\ref{sec:algo}; we use CPLEX to solve \ref{eq:Stackelberg2}. For $K=0,$ the stuffing vector $z^{*}$ is plotted as a function of the budget $G$ in Fig.~\ref{fig:plot_K0}. For $K=1$ and $K=2,$ in Figs.~\ref{fig:plot_K1} and \ref{fig:plot_K2} we plot the equilibrium values of the number of votes stuffed in the five booths as a function of G.
		\begin{figure}[H]
			\centering
			\includegraphics[width=0.48\textwidth]{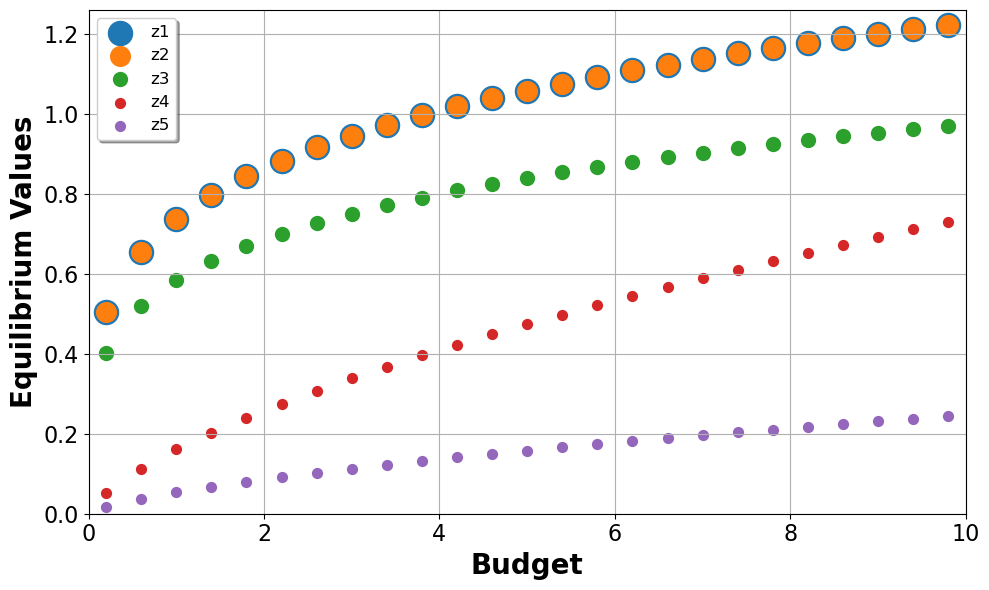}
			\caption{Optimum stuffed votes $z^*$ as a function of budget $G$ for $K=0$ inspectors.}
			\label{fig:plot_K0}
		\end{figure}
		\begin{figure}[H]
			\centering
			\includegraphics[width=0.48\textwidth]{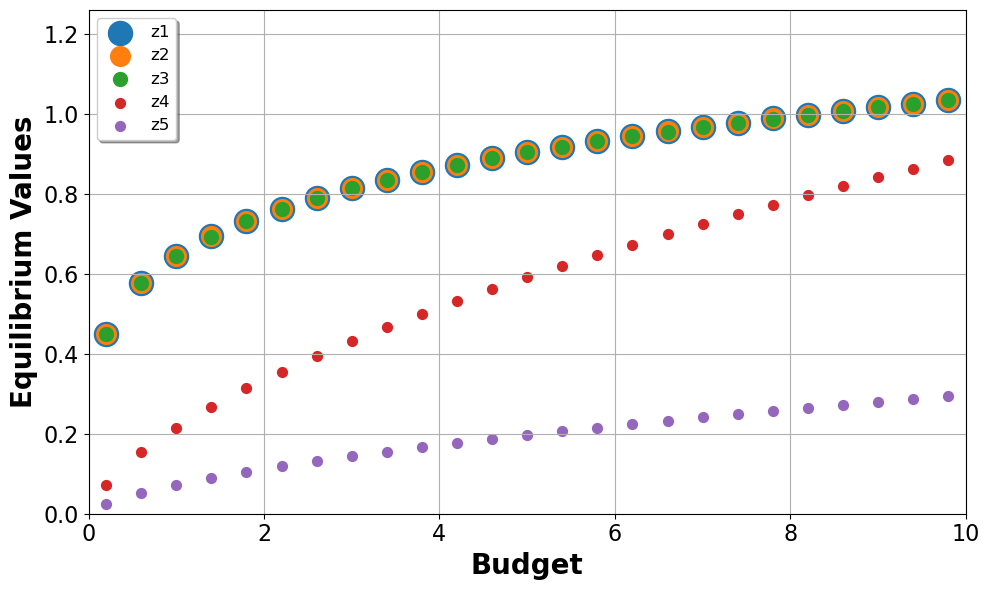}
			\caption{Equilibrium stuffed votes $z^*$ as a function of budget $G$ for $K=1$ inspector.}
			\label{fig:plot_K1}
		\end{figure}
		\begin{figure}[H]
			\centering
			\includegraphics[width=0.48\textwidth]{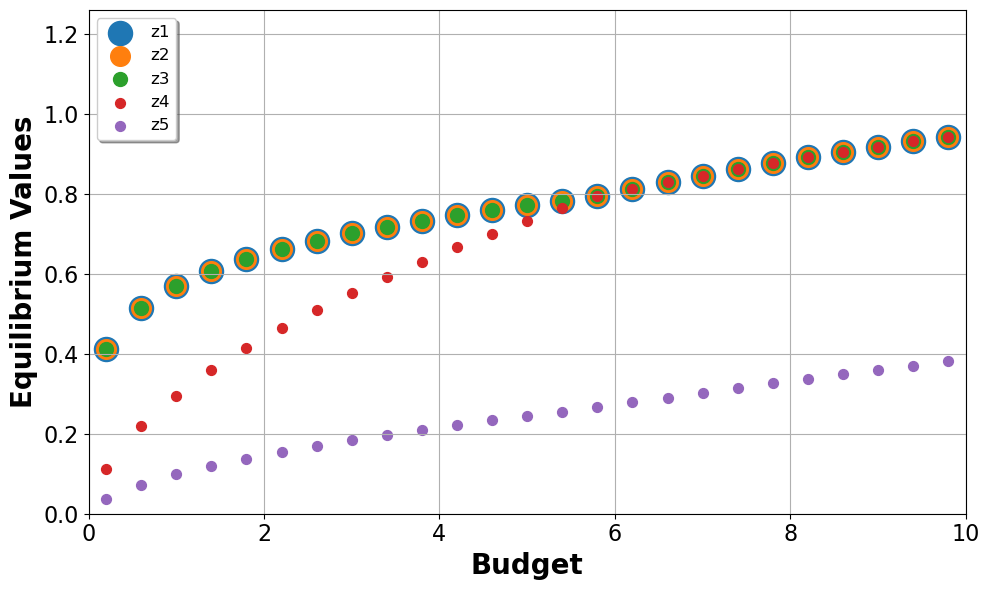}
			\caption{Equilibrium stuffed votes $z^*$ as a function of budget $G$ for $K=2$ inspectors.}    
			\label{fig:plot_K2}
		\end{figure}
		Observe the significant change in the stuffing vector as $K$ increases from 0. Specifically, the $z^*$ values in set $A_z $ decrease while those in set $B_z$ increase---since the inspectors will negate the votes in booths of set $A_z$, $P_1$  increases stuffing in set $B_z.$ From the perspective of the structure slope ($\theta_z$), as the number of inspectors grows, $\theta_z$ also rises. However, the slopes for set $B_z$ remain constant, leading to higher $z^*$ values in set $B_z$ at the same budget. Also observe that in Fig. \ref{fig:plot_K2}, $z_4$ transitions from set $B_z$ to set $A_z.$ This happened because the $z_4$ caught up with the $z_a$ values as the budget was increased. \\
		
		We now consider a slightly more elaborate example of a nationwide plebiscite with the fifty one states as the fifty one battlefields and two parties in contention. We assume that $X_i$ is the random number of excess votes that will be cast for $P_1$ in state $i.$ Using the state-level election data from the previous twelve presidential elections we estimate the mean and variance of $X_i$. The votes cast for the Republicans and Democrats is normalised to the population of 2020.
		The sample mean and sample variance of the normalised vote count is used and we assume that if a plebiscite were to be held today,
		the two parties would get votes that are Gaussian.
		\begin{table}[H]
			\centering
			\scriptsize 
			\begin{subtable}{0.45\linewidth}
				\centering
				\caption{G = 10,000}
				\label{tab:g_10000}
				\begin{tabular}{|c|c|c|}
					\hline
					\textbf{K} & \textbf{Stuffed votes} & \textbf{Prob of win} \\
					\hline
					2  & 63032  & 0.423515 \\
					\hline
					5  & 54758  & 0.422737 \\
					\hline
					8  & 48541  & 0.422152 \\
					\hline
					10 & 44898  & 0.421810 \\
					\hline
				\end{tabular}
			\end{subtable}
			\hfill
			\begin{subtable}{0.45\linewidth}
				\centering
				\caption{G = 100,000}
				\label{tab:g_100000}
				\begin{tabular}{|c|c|c|}
					\hline
					\textbf{K} & \textbf{Stuffed Votes} & \textbf{Prob of win} \\
					\hline
					2 & 199325  & 0.436372 \\
					\hline
					5 & 173161  & 0.433898 \\
					\hline
					8 & 153503  & 0.432041 \\
					\hline
					10 & 141981 & 0.430954 \\
					\hline
				\end{tabular}
			\end{subtable}
			\\
			\begin{subtable}{0.45\linewidth}
				\centering
				\caption{G = 1,000,000}
				\label{tab:g_1000000}
				\begin{tabular}{|c|c|c|}
					\hline
					\textbf{K} & \textbf{Stuffed Votes} & \textbf{Prob of win} \\
					\hline
					2 & 630322 & 0.477412 \\
					\hline
					5 & 547585 & 0.469501 \\
					\hline
					8 & 485419 & 0.463565 \\
					\hline
					10 & 448984 & 0.460090 \\
					\hline
				\end{tabular}
			\end{subtable}
			\hfill
			\begin{subtable}{0.45\linewidth}
				\centering
				\caption{G = 10,000,000}
				\label{tab:g_10000000}
				\begin{tabular}{|c|c|c|}
					\hline
					\textbf{K} & \textbf{Stuffed Votes} & \textbf{Prob of win} \\
					\hline
					2 & 1993254 & 0.606694 \\
					\hline
					5 & 1731616 & 0.582335 \\
					\hline
					8 & 1535032 & 0.563817 \\
					\hline
					10 & 1419813 & 0.552895 \\
					\hline
				\end{tabular}
			\end{subtable}
			\caption{Table shows the number of effectively stuffed votes and the probability of $P_1$ winning the plebiscite with different $G$ and $K.$
				\label{table:Expt1}
			}
		\end{table}
		With this, the probability of $P_1$ winning the nationwide plebiscite is $0.4176.$ Let $\mu_i$ and $\sigma_i$ be, respectively, the mean and standard deviation of the number of votes that $P_1$ (arbitrarily chosen as the Democratic party) would get in state $i$ if a plebiscite were to be held.
		
		We assume that $P_1$ intends to stuff and has a cost function \( g_i(z_i) = \frac{z_i^2}{\sigma_i^2 N_i}; \) we reason that it is easier to stuff in a state with higher population and higher variance. 
		
		In Table~\ref{table:Expt1}, we show the number of stuffed votes that were not negated by the inspectors and the new probability of $P_1$ winning the plebiscite for different values of $G$ and $K.$ 

		Next we fix  $G=8,000,000$ and vary $K$ and calculate the probability of $P_1$ winning as $K$ is increased. This probability is plotted in Figure~\ref{fig:exp2}. 
		
		\begin{figure}[H]
			\centering
			\includegraphics[width=0.45\textwidth]{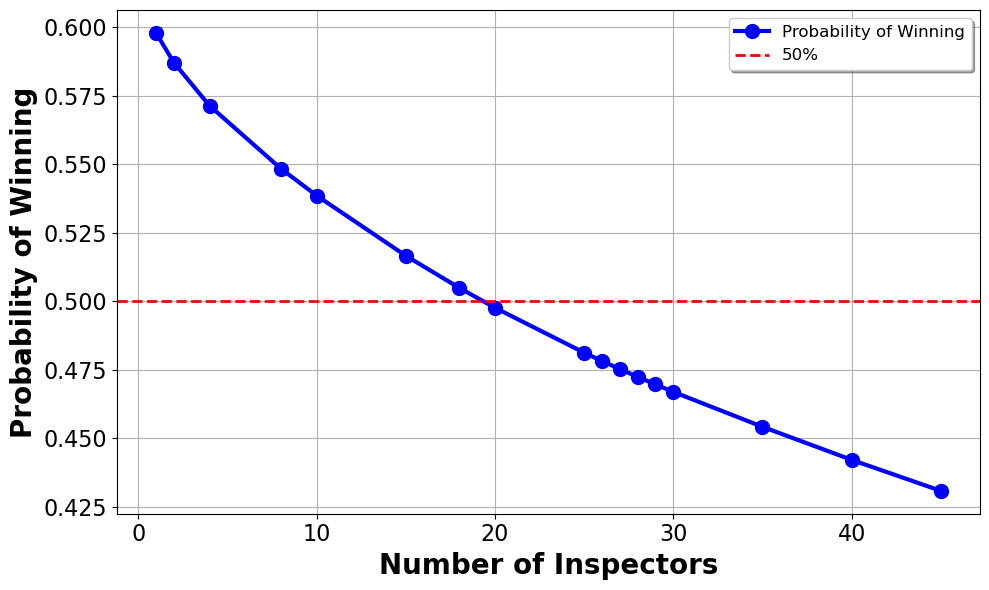}
			\caption{Probability of $P_1$ winning the nationwide plebiscite as a function of $K$ when $G=8,000,000.$}
			\label{fig:exp2}
		\end{figure}

		\section{Discussion and Conclusion}
		In this paper we defined the Ballot Stuffing Game variation of the Colonel Blotto Game. For a Plebiscite election, the structure of the Nash equilibrium vector of the `stuffing votes' and the corresponding deployment of the inspectors were derived. An approximation to the relaxed version of the Parliamentary election was shown to be the same as the Plebiscite model. We remark here that for the Plebiscite model, the case when the amount that can be stuffed in booth $i$ is capped, say at $\overline{z}_i,$ has a similar solution for both the structure of the Equilibrium and the algorithm to compute the same; we have elided that discussion to simplify the exposition. 
		
		Several options for future work present themselves. A more realistic version of the Parliamentary model where $P_1$ stuffs to increase  its  the probability of winning while $P_A$ deploys inspector to minimise this increased probability is an obvious extension. An even more realistic case would be a three-party game where both the contestants  strategically indulge in stuffing even as the inspectors by the Election Commission to minimise the change in the outcome probabilities caused by stuffing. Here it is possible that $P_A$ `allows' some stuffing that perhaps cancel each other out. 
		\bibliographystyle{aaai25}
		\bibliography{Blotto}
		\begin{appendix}
			
			\section{Technical Appendix}
			
			\subsection{Proof of Lemma~\ref{lemma:Stackelber1}}
			We will first demonstrate that the cost constraint holds in equality, $\sum_{j=1}^{J} g_j(z_j) = G.$ Suppose it does not, then by just increasing the minimum $z^*$ value, $z_{(J)}^*$. We will  increase the objective function, contradicting the fact that maximum was attained before the adjustment.\\
			
			Using the above we are in the shape to prove uniqueness for \ref{eq:Stackelberg1}. To the contrary, assume we have two distinct solution $z^*,\tilde{z}.$	Note that our optimization problem is convex, so any point lying on the segment in between $z^*,\tilde{z}$ will also be a solution. Let $\hat{z}=\frac{z^*+\tilde{z}}{2}$, which is also a solution of \ref{eq:Stackelberg1}, hence will also satisfy $\sum_{j=1}^{J} g_j(\hat{z}_j) = G.$ Using strict convexity of $g_j()$, we have $\sum_{j=1}^{J} g_j(\hat{z}_j)< \sum_{j=1}^{J} \frac{g_j(z^*_j)}{2}+\sum_{j=1}^{J} \frac{g_j(\tilde{z}_j)}{2}=G$ proving the contradiction.\\
			
			Now we prove that set of indices of $z^*$ that have the maximum value has cardinality at least $K+1$. Note the objective of the optimization problem is to maximize the total number of votes in the minimum $(J-K)$ ballots. Hence the largest $K$ values will be kept as small as possible by the optimization problem, that is indeed equal to $z^*_{(K+1)}$ making the cardinality strictly more than $K$.
			\subsection{Proof of Lemma \ref{lemma:q_existence}}
			We will restate our Lemma \ref{lemma:q_existence} in terms of matrix equations.
			\begin{lemma}
				\label{lemma:exist-LM}
				Consider the following system of equations: $R q = p$, where $R \in \mathbb{R}^{n \times m}$, $q \in \mathbb{R}^m$, and $p \in \mathbb{R}^n$ with $m=\binom{n}{v}$. If the components of the vector $p$ lie between $0$ and $1$ and their sum is equal to $v$, and if the matrix $R$ is defined such that its columns are from the set $S_R:=\{r|\sum_{j=1}^{n}r_j=v,r_j \in \{0,1\} \}$, then there exists a solution to this system.
			\end{lemma}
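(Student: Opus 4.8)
The plan is to prove the matrix reformulation stated as Lemma~\ref{lemma:exist-LM}, from which Lemma~\ref{lemma:q_existence} follows immediately. The first observation I would make is that the normalization $\sum_I q_I = 1$ need not be imposed as a separate constraint: every column $r$ of $R$ satisfies $\mathbf{1}^\top r = v$, so any $q$ with $Rq = p$ automatically has $v \sum_I q_I = \mathbf{1}^\top p = v$, hence $\sum_I q_I = 1$ (the corner case $v=0$ being trivial, since then $R$ has a single all-zero column and $p = 0$). Thus it suffices to establish feasibility of the system $Rq = p,\ q \geq 0$.

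Next I would argue by contradiction using Farkas' lemma: if this system is infeasible, there exists $y \in \mathbb{R}^n$ with $R^\top y \geq 0$ (componentwise) and $p^\top y < 0$. The inequality $R^\top y \geq 0$ says exactly that $\sum_{j \in I} y_j \geq 0$ for every $v$-element subset $I \subseteq [n]$, because the $I$-th coordinate of $R^\top y$ is $\sum_{j \in I} y_j$. Sorting the coordinates as $y_{(1)} \leq \cdots \leq y_{(n)}$ and specializing $I$ to the indices of the $v$ smallest entries yields $\sum_{i=1}^{v} y_{(i)} \geq 0$.

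The key step is then to show $p^\top y \geq \sum_{i=1}^v y_{(i)}$, which contradicts $p^\top y < 0$ and completes the proof. This holds because $p$ is feasible for the linear program $\min\{\, x^\top y : 0 \leq x_j \leq 1,\ \sum_j x_j = v \,\}$, whose optimal value equals $\sum_{i=1}^{v} y_{(i)}$: any feasible $x$ can be weakly improved by shifting mass from a coordinate with larger $y_j$ to one with smaller $y_j$ (a fractional-knapsack / rearrangement argument), so the minimum is attained by putting mass $1$ on the $v$ coordinates of smallest $y$-value. Combining, $p^\top y \geq \sum_{i=1}^v y_{(i)} \geq 0$, the desired contradiction; hence $q \geq 0$ with $Rq = p$ exists, and by the first step it is a probability mass function, i.e.\ it satisfies \eqref{eq:p-q_consistency}.

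The main obstacle is essentially conceptual rather than computational: stating the Farkas alternative in the correct form, and recognizing that the combinatorial family of constraints ``$\sum_{j \in I} y_j \geq 0$ for all $v$-subsets $I$'' collapses to the single scalar inequality on the $v$ smallest coordinates of $y$, which is precisely the quantity that lower-bounds $p^\top y$ over the box-and-sum polytope. A little care is also needed for the degenerate values $v = 0$ and $v = n$, but both are immediate.
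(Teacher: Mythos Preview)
Your proof is correct and essentially identical to the paper's: both invoke Farkas' lemma and derive the contradiction by showing $p^\top y \geq \sum_{i=1}^{v} y_{(i)} \geq 0$ whenever $R^\top y \geq 0$. The only cosmetic difference is that you justify $p^\top y \geq \sum_{i=1}^{v} y_{(i)}$ via an LP/rearrangement observation while the paper carries out the same bound by a short direct computation; your extra remark that the normalization $\sum_I q_I = 1$ follows automatically from $Rq = p$ is a nice addition not made explicit in the paper.
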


			To prove the following lemma, we use the following Farkas lemma:~\cite{Boyd}\\\\
			Let $\mathbf{R} \in \mathbb{R}^{n\times m}$ and
			$\mathbf{p} \in \mathbb{R}^{n}$. Then exactly one of the following two
			assertions is true:
			\begin{enumerate}
				\item There exists an $\mathbf{q} \in \mathbb{R}^{m}$ such
				that $\mathbf{Rq} = \mathbf{p}$ and $\mathbf{q} \geq 0$.
				\item There exists a $\mathbf{y} \in \mathbb{R}^{n}$ such that
				$\mathbf{R}^{\mathsf{T}}\mathbf{y} \geq 0$ and
				$\mathbf{p}^{\mathsf{T}}\mathbf{y} < 0$.
			\end{enumerate}
			To prove our claim by contradiction, lets assume second statement is true and there exists such a $y$. The	matrix $\mathbf{R}^{\mathsf{T}}$ is of dimensions $(m \times n)$, where each row represents a permutation of $v$ ones and $n-v$ zeros. The inequality $\mathbf{R}^{\mathsf{T}}\mathbf{y} \geq 0$ implies that $r^T.y \geq 0 \ \forall r \in S_R.$ This condition implies any sum of $v-$size subset of $y$ is greater than equal to zero. We also have $\sum_{j=1}^{n}p_i=v$. Let $S := \sum\limits_{i=1}^n p_i y_i$ which is strictly less than zero. Without loss of generality, we assume that $y_i$ is arranged in descending order ($y_1 \geq y_2 \geq \ldots \geq y_n$).
			\begin{claim}
				$\ \ \ S \geq \sum\limits_{i=n-v+1}^n y_i$	
			\end{claim}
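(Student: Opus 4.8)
The plan is to prove the inequality by comparing the fractional selection $p=(p_1,\dots,p_n)$ against the extreme selection that concentrates all of its mass on the $v$ \emph{smallest} coordinates of $y$. Define $e\in\{0,1\}^n$ by $e_i=1$ for $i\in\{n-v+1,\dots,n\}$ and $e_i=0$ otherwise, so that $\sum_{i=1}^n e_i y_i=\sum_{i=n-v+1}^n y_i$ is exactly the target right-hand side. Since $\sum_i p_i=v=\sum_i e_i$, the difference $d:=p-e$ satisfies $\sum_{i=1}^n d_i=0$, and the claim becomes the assertion $\sum_{i=1}^n d_i y_i\ge 0$.

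First I would record the sign pattern of $d$. For $i\le n-v$ we have $e_i=0$, hence $d_i=p_i\ge 0$; for $i\ge n-v+1$ we have $e_i=1$, hence $d_i=p_i-1\le 0$. Thus the nonnegative coordinates of $d$ sit where $y$ is large and the nonpositive ones where $y$ is small. Introduce the separating threshold $t:=y_{n-v+1}$; because $y$ is sorted in descending order, $y_i\ge t$ for all $i\le n-v$ and $y_i\le t$ for all $i\ge n-v+1$.

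Next I would split the sum into the two blocks and bound each. On the first block, $d_i\ge 0$ together with $y_i\ge t$ gives $\sum_{i\le n-v} d_i y_i\ge t\sum_{i\le n-v} d_i$; on the second block, $d_i\le 0$ reverses $y_i\le t$ into $d_i y_i\ge d_i t$, so $\sum_{i\ge n-v+1} d_i y_i\ge t\sum_{i\ge n-v+1} d_i$. Summing the two and invoking $\sum_{i=1}^n d_i=0$ yields $\sum_{i=1}^n d_i y_i\ge t\sum_{i=1}^n d_i=0$, which is the claim. Combined with $S<0$ this forces $\sum_{i=n-v+1}^n y_i<0$, contradicting $r^{\mathsf T}y\ge 0$ for the $r\in S_R$ that indicates the last $v$ coordinates, and so completes the Farkas-based argument.

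Honestly there is no real obstacle here: the statement is a one-line rearrangement (a majorization-type estimate) once the bookkeeping is in place. The only thing to get right is the choice of the threshold $t$ and the directions of the inequalities when multiplying the nonpositive $d_i$ on the second block; everything else is immediate from $0\le p_i\le 1$ and $\sum_i p_i=v$.
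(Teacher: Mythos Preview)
Your argument is correct and is essentially the same threshold-at-$y_{n-v+1}$ rearrangement that the paper uses: the paper rewrites $\sum_{i\le n-v}p_i=\sum_{i>n-v}(1-p_i)$, multiplies both sides by $y_{n-v+1}$, and then bounds each block exactly as you do, which is algebraically identical to your $\sum_i d_i y_i\ge t\sum_i d_i=0$. Your packaging via $d=p-e$ is a bit cleaner, but the underlying step and the final contradiction with the row of $R^{\mathsf T}$ picking the last $v$ coordinates are the same.
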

			We have $\sum\limits_{i=1}^{n} p_i = v$
			\begin{align*}
				&\Rightarrow \sum\limits_{i=1}^{n-v} p_i = v-\sum\limits_{i=n-v+1}^{n} p_i\\
				&\hspace{1.43cm} =(1-p_{n-v+1})+(1-p_{n-v+2})+ \ldots (1-p_{n})\\
			\end{align*}
			
			\begin{align*}
				&\Rightarrow y_{n-v+1}\sum\limits_{i=1}^{n-v} p_i =y_{n-v+1}(1-p_{n-v+1})+\\
				&\hspace{3cm} \ldots + y_{n-v+1}(1-p_n)\\
				&\because \text{ we assumed } y_1 \geq y_2 \geq \ldots \geq y_n\\ 			
				&\Rightarrow \sum\limits_{i=1}^{n-v} y_i p_i\geq y_{n-v+1}(1-p_{n-v+1})+y_{n-v+2}(1-p_{n-v+2})\\
				&\hspace{2cm}\ldots + y_{n}(1-p_n) \\
				&\Rightarrow \sum\limits_{i=1}^n y_i p_i \geq \sum\limits_{i=n-v+1}^n y_i\\
				&\text{Hence } S \geq \sum\limits_{i=n-v+1}^n y_i \text{ which is a sum of }v-\text{size subset of }y\\
				& \Rightarrow S \geq 0
			\end{align*}
			But our statement says that $S<0$ which is a contradiction. Hence statement 1 must be true, proving our existence of solution.
			\subsection{Proof of Theorem \ref{thm:Stackelberg-structure}}
			We will begin by establishing the sufficiency condition of the Theorem 1. Let us assume there exists a solution $z \in \R_+^{J}$ which satisfies all the structural properties from Theorem \ref{thm:ballot_NE}. Our goal is to show that this is essentially the optimal solution $z^*$ of \ref{eq:Stackelberg2}. To achieve this we need to identify the set of Lagrange multipliers that satisfy all the KKT conditions of \ref{eq:Stackelberg2}. Therefore, the focus of the proof shifts to showing the existence of such Lagrange multipliers. For notational convenience write $1-y$ as $t$. Define $S_t:= \left\{t: \sum_{j} t_j = J-K\right\}$.
			
			For \ref{eq:Stackelberg2}, we now write the Lagrangian
			$\mathcal{L}:\mathcal{R}^J \times \mathcal{R} \times \mathcal{R}^m
			\rightarrow \mathcal{R}$ with Lagrangian multipliers $\beta_i, \gamma, \delta_t \forall \ i \in \{1,\ldots,J\}
			\text{ and } t \in S_t$ with cardinality $m$ as
			\begin{equation}
				\begin{split}
					\mathcal{L}(z,t,\beta_i, \gamma, \delta_t) = -U &+ \sum_{i=1}^{J}\beta_i(-z_i) + \gamma\left(\sum_{i=1}^{J} g_i(z_i) - G\right) \nonumber \\
					&+ \sum_{t \in S_t} \delta_t\left(U - z.t\right) \\
				\end{split}
				\label{eq:lagrangian}
			\end{equation}
			
			The KKT conditions for our optimization problem is:
			\begin{enumerate}
				\item \label{cond:optimality}$ \beta_i, \gamma, \delta_t \geq 0$ for $1 \leq i \leq J$  and $t \in S_t$ 
				\item $\beta_i z_i = 0 $ for $1 \leq i \leq J$
				\item $\delta_t(U-z.t) \text{ for } t \in S_t$ 
				\item $\gamma(\sum\limits_i^J g_i(z_i) - G) = 0$
				\item $\xi_i+\beta_i = \gamma g'_i(z_i)\forall i \in \{1, 2, \ldots, J\}$ \\where $\xi_i = \sum_{t\in \{S_t|t_i=1\}} \delta_t$ 
				\item $\sum_{t \in S_t}\delta_t = 1$
			\end{enumerate}
			Lets look at what value will be taken by $U$; it is equal to the minimum sum of $(J-K)$-size subset of $z$. Using the set definitions and Lemma 1, we can say that $U=(J-K)z_a+\sum_{b \in B_z}z_b+\sum_{c \in C_z}z_c$. This expression is because we have the value of \(U\) as the total sum minus the maximum \(K\) ballots, but those \(K\) ballots will all be from set \(A\) as it contains the highest values and has cardinality greater than \(K\). From this, we can comment on the $\delta_t$ terms by looking at the 3rd KKT condition.
			We have $\delta_t=0$ if $t_i=0$ for any $i \in B_z,C_z$. Therefore, $\xi_{b_i} = \xi_{c_i} = \sum_{t \in S_t}\delta_t$, which equals 1 by 6th KKT condition.\\
			So, we conclude that \(\xi_{b_i} = \xi_{c_i} = 1\).
			\\\\
			For $b \in B_z$, we have $\beta_b = 0$ using the 2nd KKT condition. Using the 5th KKT condition, we have:
			\[
			\xi_{b} = \gamma g'_b(z_b)
			\]
			Given that $\xi_b = 1$, it follows that:
			\[
			g'_b(z_b) = \frac{1}{\gamma}
			\]
			Hence, we can determine the value of $\gamma$, which is positive since 
			\[
			g'_b(z_b)> 0.
			\]
			\begin{remark}
				The set $B_z$ may be empty in certain cases. In that case make $\frac{1}{\gamma}=\theta_z$. We do this because $\theta_z=g'_b(z_b) \ \forall \ b \in B_z$ due to \emph{Structure Slope Condition} when $B_z$ is non empty.
			\end{remark}
			For $c \in C_z$, using 5th KKT condition we have
			\begin{align*}
				\xi_c+\beta_c &= \gamma  g'_c(z_c)\\
				\Rightarrow 1+\beta_c &= \gamma g'_c(z_c)\\
				\Rightarrow \beta_c &= \gamma g'_c(z_c) - 1\\
				\text{TPT } \beta_c &\geq 0\\
				\Leftrightarrow &g'_c(z_c) \geq \frac{1}{\gamma} 
			\end{align*}
			Which is true because $\frac{1}{\gamma}=\theta_z \leq g'_c(z_c)$, from our \emph{Structure Slope Condition.}
			\\\\
			For $a \in A_z$, we have $\beta_a=0$ using the 2nd KKT condition. Using 5th KKT condition we have: $\xi_{a}=\gamma g'_a(z_a)$\\	
			
			The only thing left to prove is the existence of non-negative $\delta_t$. Recall that $\delta_t = 0$ if $t_i = 0$ for any $i \in B_z, C_z$. So, we only need to find the values of $\delta_t$ for which $t_i = 1 \ \forall \ i \in B, C$. These remaining $\delta_t$ terms total $\binom{|A_z|}{K}$ in number. Note that we can formulate this as a problem of system of equations using the previous conditions and prove its existence from Lemma \ref{lemma:exist-LM}. We again list the conditions we get from before.
			\begin{enumerate}
				\item $\xi_{a} = \sum_{\{t|t_{a}=1, t_i = 1 \ \forall \ i \in B_z, C_z\}} \delta_t \ \forall a \in A_z$ 
				\item $\xi_{a}=\gamma g'_a(z_a) \forall a \in A_z$
				\item $\sum_{a \in A_z} \xi_{a} = |A_z|-K$
			\end{enumerate}
			It is easy to see existence of remaining $\delta_t$ terms, by writing above conditions as system of equation $Rq=p$. Here $q$ represents the $\delta_t$ terms and $p$ represents to $1-\xi_a$ terms.\\
			
			Now we prove the necessity part of Theorem \ref{thm:Stackelberg-structure}. Here we have the primal dual optimal pair($z^*,\lambda^*$) where $\lambda^*$ denotes the Lagrange multipliers $\beta_i, \gamma, \delta_t \forall \ i \in 1,\ldots,J
			\text{ and } t \in S_t$, which satisfy all the KKT conditions. Remember the values of few Lagrange multipliers which are derived in the previous equation which will be used ahead. We need to prove that this $z^*$ satisfies the optimal structure. \\
			
			The \emph{sum condition} of Theorem \ref{thm:Stackelberg-structure} is already above proved in the appendix. We will now prove the \emph{slope order condition}. \\
			For $c \in C_z$, using 5th KKT condition we have
			\begin{align*}
				\xi_c+\beta_c &= \gamma  g'_c(z_c)\\
				\Rightarrow 1+\beta_c &= \gamma g'_c(z_c) \ \because \xi_c=1\\
				\Rightarrow g'_c(z_c) &= \frac{1+\beta_c}{\gamma}\\
				\Rightarrow  g'_c(z_c) &\geq \frac{1}{\gamma} \ \because \beta_c \geq 0\\
				\Rightarrow  g'_c(z_c) &\geq \theta_z\\			
			\end{align*}
			
			For $a \in A_z$, we have $\beta_a=0$ using the 2nd KKT condition. Using 5th KKT condition we have
			\begin{align*}
				\xi_{a}&=\gamma g'_a(z_a)\\
				\Rightarrow g'_a(z_a)&=\frac{\xi_{a}}{\gamma}\\
				\Rightarrow g'_a(z_a)&\leq \frac{1}{\gamma} \ \because (\xi_{a}<1)\\
				\Rightarrow g'_a(z_a)&\leq \theta_z
			\end{align*}
			
			To prove the \emph{structure slope condition} we manipulate the $\xi_a$ terms using the 5th KKT condition. We have
			\begin{align*}
				&\xi_{a_1}=\gamma g'_{a_1}(z_a)\\
				&\xi_{a_2}=\gamma g'_{a_2}(z_a)\\
				&.\\
				&.\\
				&.\\
				&\xi_{a_{|A|}}=\gamma g'_{a_{|A|}}(z_a)\\
			\end{align*}
			Add all these $\xi_a$ terms together to get:\\
			\begin{align*}
				\sum_{i=1}^{|A|} \xi_{a_i} &= \gamma \sum_{i=1}^{|A|} g'_{a_i}(z_a) \\
				\Rightarrow(|A|-K)\sum_{t \in S_t} \delta_t &= \gamma \sum_{i=1}^{|A|}g'_{a_i}(z_a) \\
				\Rightarrow\frac{\sum_{i=1}^{|A|} g'_{a_i}(z_a)}{(|A|-K)} &= \frac{1}{\gamma} \\
				\Rightarrow\frac{\sum_{i=1}^{|A|} g'_{a_i}(z_a)}{(|A|-K)} &= \theta_z
			\end{align*}
			
			\subsection{Proof of Lemma \ref{lemma: PQerrors}}
			We prove a Lemma which directly proves our result.
			\begin{lemma}
				\label{lemma:error-struc}
				Errors occur in block structure, with block of error P, then our correct $|A_{z^*}|$, then a block of error Q.
			\end{lemma}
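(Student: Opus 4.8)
The plan is to show that as the candidate value~$|A|$ sweeps upward from~$K+1$ to~$J$, the status returned by \calcZ\ passes through a block of `Error~P', then (at most) the single correct value~$|A_{z^*}|$, then a block of `Error~Q'. Since we already know (it is remarked in the text) that Errors~P and~Q cannot occur simultaneously, it suffices to prove the two adjacency facts: (i) if \calcZ$(\ell)$ returns `Error~P' or `correct', then \calcZ$(\ell+1)$ cannot return `Error~P' only in a way that... — more precisely, the clean formulation is: `Error~Q' is never immediately followed (as $\ell$ decreases) by `Error~P', equivalently, once \calcZ\ returns `correct' or `Error~Q' at some~$\ell$, it returns `Error~Q' at every~$\ell' > \ell$. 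Together with the endpoint facts that \calcZ$(K+1)$ never returns `Error~Q' (since there $\theta = \sum_{i=1}^{K+1} g_i'(z_a) > g_i'(z_a)$ for each~$i \le K+1$, so Condition~2 holds) and \calcZ$(J)$ never returns `Error~P' (all coordinates equal, so Condition~1 holds trivially), the block structure follows.

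The key technical device is a monotonicity comparison between the scalar~$z_a$ produced by \calcZ$(\ell)$ and by \calcZ$(\ell+1)$. Writing~$u$ for the $z_a$-value at~$\ell$, $v$ for the $z_a$-value at~$\ell+1$, and~$u_{\ell+1} := (g'_{\ell+1})^{-1}(\theta(u))$ for the $(\ell+1)$-th coordinate of the candidate vector at level~$\ell$, I would first prove that~$v$ always lies between~$u$ and~$u_{\ell+1}$. The argument is a feasibility/budget squeeze: if~$v$ exceeded both, then the structure slope~$\theta(v)$ at level~$\ell+1$ would strictly exceed~$\theta(u)$, which by monotonicity of each~$g_i'$ and~\eqref{eq:candidate_z} forces every coordinate of the candidate vector to strictly increase, contradicting~$\sum_i g_i(\mathbf{z}_i) = G$; symmetrically~$v$ cannot be below both. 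Granting this, the adjacency lemmas are short algebraic manipulations using Assumption~\ref{assumption:mono_constraints} (the non-crossing of the marginals) and the definition of~$\theta$: e.g., to show `Error~P' at~$\ell$ forces `Error~P' or `correct' (never `Error~Q') at~$\ell+1$, one assumes the contrary, uses~$u < v < u_{\ell+1}$ together with the structure-slope identity~$\theta(u) = g'_{\ell+1}(u_{\ell+1})$ at level~$\ell$ and the Error-Q inequality~$\theta(v) < g'_{\ell+1}(v)$ at level~$\ell+1$, and derives~$g'_{\ell+1}(u_{\ell+1}) < g'_{\ell+1}(v)$, contradicting~$v < u_{\ell+1}$. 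The `correct at~$\ell$ implies Error~Q at~$\ell+1$' and `Error~Q at~$\ell$ implies Error~Q at~$\ell+1$' cases are handled by the same chain of inequalities, the only input being~$u > v > u_{\ell+1}$ and the identity~$\theta(u) = g'_{\ell+1}(u_{\ell+1})$.

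The main obstacle I anticipate is making the `$v$ lies between $u$ and $u_{\ell+1}$' step fully rigorous: one has to argue that the map sending~$|A|$ to the solution~$z_a$ of~\eqref{eq:Solve_za} is well-defined (the left side of~\eqref{eq:Solve_za} is continuous and strictly increasing in~$z_a$, tending from~$0$ to~$+\infty$, hence there is a unique root), and that increasing~$\theta$ raising \emph{every} coordinate is incompatible with the fixed budget~$G$ — this needs the observation that for the candidate vector, $z_a$ and all~$(g_{i}')^{-1}(\theta(z_a))$ move in the same direction as~$z_a$. Once that comparison is in hand, the remaining case analysis is routine inequality chasing; I would also need to double-check the boundary indices~$|A| = K+1$ and~$|A| = J$ where Conditions~1 or~2 are vacuous, to confirm the block at each end is correctly seeded. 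Finally, combining all adjacency implications gives: the status sequence, read from~$\ell = J$ downward, is `Error~Q'$^{*}$ then at most one `correct' then `Error~P'$^{*}$, which is exactly the claimed block structure and immediately justifies the binary search in Algorithm~\ref{alg:monocon} using the P/Q tag as the direction indicator.
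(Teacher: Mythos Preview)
Your proposal is correct and follows essentially the same route as the paper: the same endpoint checks at $|A|=K+1$ and $|A|=J$, the same ``$v$ lies between $u$ and $u_{\ell+1}$'' comparison proved by the budget-squeeze argument, and the same three adjacency implications (P cannot be immediately followed by Q; correct is followed by Q; Q is followed by Q) established by the identical inequality manipulations. If anything, you are slightly more careful than the paper in flagging the well-definedness of the $z_a$-root in~\eqref{eq:Solve_za}.
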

			To justify Lemma \ref{lemma:error-struc} we prove a series of Lemmas.
			\begin{lemma}
				When $|A|=K+1$, it will give us the correct solution or Error
				P. When $|A|=J$, it will give us the correct solution or Error Q.
			\end{lemma}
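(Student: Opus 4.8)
The plan is to analyze the two extreme choices $|A| = K+1$ and $|A| = J$ separately, using the definitions of \textbf{Error~P} and \textbf{Error~Q} from the routine \calcZ, and show that each choice can only return one of the two error tags (or `correct').

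First I would take $|A| = J$. In this case the candidate vector $\mathbf{z}$ produced by \eqref{eq:candidate_z} has all coordinates equal to $z_a$, since there are no indices $i > |A|$; equivalently, $B_{z}$ is empty. Condition~1 of \calcZ\ ($z_a \geq \mathbf{z}_{|A|+1}$) is vacuous (indeed undefined, per the Remark following the algorithm), so \textbf{Error~P} cannot be triggered. Hence \calcZ$(J)$ returns either `correct' or \textbf{Error~Q}. This is the easy half.

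Next I would take $|A| = K+1$. Here $\theta(z_a) = \frac{\sum_{i=1}^{K+1} g_i'(z_a)}{(K+1)-K} = \sum_{i=1}^{K+1} g_i'(z_a)$. Under Assumption~\ref{assumption:mono_constraints} the marginals are nonnegative and ordered $g_1'(z_a) \le \cdots \le g_{K+1}'(z_a)$, so in particular $g_{K+1}'(z_a) \le \sum_{i=1}^{K+1} g_i'(z_a) = \theta(z_a)$ (using that each $g_i'(z_a) \ge g_i'(0) = 0$). But $g_{K+1}'(z_a) \le \theta(z_a)$ is precisely Condition~2 of \calcZ, so \textbf{Error~Q} (which flags the violation of Condition~2) cannot be triggered. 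Hence \calcZ$(K+1)$ returns either `correct' or \textbf{Error~P}. The only subtlety here is to confirm nonnegativity of the marginals, which follows from \eqref{eq:ordering2} together with convexity of each $g_i$ (so $g_i'$ is increasing, hence $g_i'(z_a) \ge g_i'(0) = 0$ for $z_a \ge 0$); this is routine.

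The main (modest) obstacle is bookkeeping around the corner cases where Conditions~1 or~2 are undefined, as noted in the Remark after Algorithm~\ref{alg:calc_z}: at $|A| = J$ we must argue Condition~1 simply does not apply, and at $|A| = K+1$ that Condition~2 is the only relevant check after the structure-slope computation; once the monotonicity of $g_i'$ and $g_i'(0)=0$ are in hand, both claims drop out immediately. Combining the two halves gives the statement: $|A| = K+1$ yields `correct' or \textbf{Error~P}, and $|A| = J$ yields `correct' or \textbf{Error~Q}.
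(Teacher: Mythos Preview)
Your proposal is correct and follows essentially the same argument as the paper: for $|A|=J$ all coordinates equal $z_a$ so Condition~1 is vacuous (no Error~P), and for $|A|=K+1$ the structure slope equals the full sum $\sum_{i=1}^{K+1} g_i'(z_a)$, so nonnegativity of the marginals gives $g_{K+1}'(z_a)\le \theta(z_a)$ (no Error~Q). Your write-up is in fact a bit more careful than the paper's, since you explicitly justify $g_i'(z_a)\ge 0$ via convexity and $g_i'(0)=0$, and you address the corner-case Remark about Condition~1 being undefined at $|A|=J$.
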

			\begin{proof}
				When $|A| = K + 1$, $\theta_z = \sum_{a=1}^{|A|}
				g'_{a}(z_a)$, implying that $g'_{a}(z_a) < \theta_z$ for $1\leq a \leq |A|$ hence violation Q cannot occur. Conversely, when $|A| = J$,
				all $z$ values are equal hence violation P cannot occur.
			\end{proof}		
			In the following lemmas we see what violations can occur in adjacent $|A|$. By analyzing what errors can occur next to each other we can conclude the block structure.\\
			
			We define a notation to look at the these two adjacent candidate $|A|,|A|+1$. Let us denote $|A|$ by $l$. Let $\mathbf{z}$ values calculated for $l$ be $z_{1,1}, z_{1,2}, \ldots, z_{1,J}$ and for $l+1$ be $z_{2,1}, z_{2,2}, \ldots, z_{2,J}$. Let $z_a$ value for $l$ and $l+1$ be represented by $u$ and $v$, respectively, and let the value of $z_{1,l+1}$ be $u_{l+1}$. Let the structure slope for $l$ and $l+1$ be represented by $\theta_1$ and $\theta_2$ respectively.
			\begin{lemma}
				$v$ lies between $u,u_{l+1}$.
			\end{lemma}
			\begin{proof}
				To prove this by contradiction, assume that $v$	does not lie between $u$ and $u_{l+1}$. Hence, it can be either	greater than both $u$ and $u_{l+1}$ or less than both values. If $v$ is greater than both $u$ and $u_{l+1}$, then $\theta_2>\theta_1$. Which implies the values of $z_{2,i}>z_{1,i} \ \forall i \in \{1,\ldots J\}$ due to Structure Slope Condition from Theorem \ref{thm:Stackelberg-structure}. However, this leads to a contradiction because the values of all ballots cannot increase, given that	$\sum_{i=1}^{J}g_i'(z_i)=G$. Same way we can say $v$ cannot be less
				than both $u$ and $u_{l+1}$.
			\end{proof}
			\begin{lemma}
				Violation Q cannot occur right adjacent to P.
			\end{lemma}
			\begin{proof}
				To prove this by contradiction, assume $l$ exhibit error P and $l+1$ exhibit error Q.
				\begin{align*}
					&\text{Using violation equations we have:} \\
					&u<v<u_{l+1}\\
					&\frac{\sum_{i=1}^{l}g_i'(u)}{l-K} = g_{l+1}'(u_{l+1}) \ \ (\text{Structure slope condition})\\
					&\frac{\sum_{i=1}^{l+1}g_i'(v)}{l+1-K} < g_{l+1}'(v) \ \   (\text{Violation condition of error Q})\\
					&\Rightarrow \sum_{i=1}^{l}g_i'(v) < (l-K)g_{l+1}'(v) \\
					&\Rightarrow \frac{\sum_{i=1}^{l}g_i'(v)}{l-K} < g_{l+1}'(v)\\
					&\text{Using } u < v \text{ we get:}\\
					&\frac{\sum_{i=1}^{l}g_i'(u)}{l-K} <
					\frac{\sum_{i=1}^{l}g_i'(v)}{l-K} < g_{l+1}'(v)\\  
					&\Rightarrow g_{l+1}'(u_{l+1}) < g_{l+1}'(v)\\ 
					&\text{But this is a contradiction because } v < u_{l+1}
				\end{align*}
				So we get from above that P cannot directly go to Q, hence we must
				have a solution in between them or Error Q does not occur for any $l$.
			\end{proof}
			\begin{lemma}
				Violation Q occurs to the right of solution.
			\end{lemma}
			\begin{proof}
				Assume $l$ be the correct choice, we need to prove that $l+1$ will be violation Q. Here we have $u>v>u_{l+1}$.			
				\begin{align*}
					&\text{Structure slope of $l$ is } \frac{\sum_{i=1}^{l}g_i'(u)}{l-K} = g_{l+1}'(u_{l+1})\\
					&\text{Structure slope of $l+1$ is } \frac{\sum_{i=1}^{l+1}g_i'(v)}{l+1-K}\\
					&\text{TPT } g'_{l+1}(v) > \frac{\sum_{i=1}^{l+1}g_i'(v)}{l+1-K}\\
					&\Leftrightarrow g'_{l+1}(v) > \frac{\sum_{i=1}^{l}g_i'(v)}{l-K}\\
					& g'_{l+1}(v) > g'_{l+1}(u_{l+1}) \text{ since } v>u_{l+1}\\
					& \frac{\sum_{i=1}^{l}g_i'(u)}{l-K}>\frac{\sum_{i=1}^{l}g_i'(v)}{l-K}  \text{ since } u>v\\
					& \text{From first line we have } \frac{\sum_{i=1}^{l}g_i'(u)}{l-K} = g_{l+1}'(u_{l+1})
				\end{align*}
			\end{proof}
			\begin{lemma}
				Violation Q occurs to the right of Q.
			\end{lemma}
			
			\begin{proof}
				We have $l$ as violation Q and we will prove that $l+1$ will also be violation Q. We have $u>v>u_{l+1}$. This proof can be done in the same way as done above because above we are only using the fact that
				$u>v>u_{l+1}$ and structure slope is equal to $g'_{l+1}(u_{l+1})$.
			\end{proof}
			This proves the block structure of errors which in-turn proves Lemma \ref{lemma: PQerrors}.
			
				%
			\end{appendix}
		\end{document}